\newtheorem{theorem}{Theorem}
\newtheorem{lemma}{Lemma}
\newtheorem{corollary}{Corollary}
\def\diag{\mathop{\rm diag}\nolimits}%
\def\rank{\mathop{\rm rank}\nolimits}%
\newcommand{\Ac}{\mathcal{A}}
\newcommand{\Bc}{\mathcal{B}}
\newcommand{\Ec}{\mathcal{E}}
\newcommand{\Qc}{\mathcal{Q}}
\newcommand{\Sc}{\mathcal{S}}
\newcommand{\Tc}{\mathcal{T}}
\newcommand{\Xv}{{\bf X}}
\newcommand{\Av}{{\bf A}}
\newcommand{\av}{{\bf a}}
\newcommand{\Dv}{{\bf D}}
\newcommand{\Yv}{{\bf Y}}
\newcommand{\Zv}{{\bf Z}}
\newcommand{\Wv}{{\bf W}}
\newcommand{\Sv}{{\bf S}}
\newcommand{\hv}{{\bf h}}
\newcommand{\wv}{{\bf w}}
\newcommand{\xv}{{\bf x}}
\newcommand{\bv}{{\bf b}}
\DeclareMathOperator\E{\sf E}
\let\P\relax
\DeclareMathOperator\P{\sf P}
\def\textiid{i.i.d.\@\xspace}
\newcommand\iid{\ifmmode\text{ i.i.d. } \else \textiid \fi}
\newcommand{\supp}{{\textrm{supp}}}
\begin{document}
%
\title{Support Recovery of Sparse Signals in the Presence of Multiple Measurement Vectors}
%
%
%

\author{Yuzhe~Jin,~\IEEEmembership{Student Member,~IEEE,}
        and~Bhaskar~D.~Rao,~\IEEEmembership{Fellow,~IEEE}\thanks{The material in this paper was presented in part at the Asilomar Conference on Signals, Systems, and Computers, Pacific Grove, California, USA, 2010.}}
\maketitle

\begin{abstract}
This paper studies the problem of support recovery of sparse signals based on multiple measurement vectors (MMV). The MMV support recovery problem is connected to the problem of decoding messages in a Single-Input Multiple-Output (SIMO) multiple access channel (MAC), thereby enabling an information theoretic framework for analyzing performance limits in recovering the support of sparse signals. Sharp sufficient and necessary conditions for successful support recovery are derived in terms of the number of measurements per measurement vector, the number of nonzero rows, the measurement noise level, and especially the number of measurement vectors. Through the interpretations of the results, in particular the connection to the multiple output communication system, the benefit of having MMV for sparse signal recovery is illustrated providing a theoretical foundation to the performance improvement enabled by MMV as observed in many existing simulation results. In particular, it is shown that the structure (rank) of the matrix formed by the nonzero entries plays an important role on the performance limits of support recovery.
\end{abstract}

\section{Introduction}
Suppose the signal of interest is $X\in\mathbb{R}^{m\times l}$, and $X$ is said to be sparse when only a few of its rows contain nonzero elements whereas the rest consist of zero elements. One wishes to estimate $X$ via the linear measurements $Y = AX + Z$, where $A\in\mathbb{R}^{n\times m}$ is the measurement matrix and $Z\in\mathbb{R}^{n\times l}$ is the measurement noise. The goal is to estimate $X$ from as few measurements as possible. Specifically, when $l=1$, this problem is usually termed as sparse signal recovery with  a single measurement vector (SMV); when $l>1$, it is  referred to as sparse signal recovery with multiple measurement vectors (MMV) \cite{Rao05,Cichocki}. This problem has received much attention in many disciplines motivated by a broad array of applications such as compressed sensing \cite{Donoho06_4,Candes_comp}, biomagnetic inverse problems \cite{Rao97}, \cite{Rao98}, image processing \cite{Jeffs98}, \cite{Baraniuksinglepixel}, robust face recognition \cite{WrightMa}, bandlimited extrapolation and spectral estimation \cite{Parks91}, robust regression and outlier detection \cite{JR_10ICASSP}, speech processing \cite{ChuSpeech}, channel estimation \cite{RaoCotter02}, \cite{NowakChannel}, echo cancellation \cite{Duttweiler2000}, \cite{RaoSong}, body area networks \cite{garudadri}, and wireless communication \cite{RaoCotter02}, \cite{Guo09ITA}.

\subsection{Background on the SMV Problem}
For the problem of sparse signal recovery with SMV, computationally efficient algorithms have been proposed to find or approximate the sparse solution $X \in\mathbb{R}^{m}$ in various settings. A partial list includes matching pursuit \cite{Zhang93}, orthogonal matching pursuit (OMP) \cite{Pati93}, Lasso \cite{Tibshirani94}, basis pursuit \cite{Donoho01}, FOCUSS \cite{Rao97}, iteratively reweighted $\ell_1$ minimization \cite{Candes07enhancingsparsity}, iteratively reweighted $\ell_2$ minimization \cite{ChartrandYin}, sparse Bayesian learning (SBL) \cite{Tipping01,Wipf_2004}, finite rate of innovation \cite{Vetterli02}, CoSaMP\cite{NeedellTropp08}, and subspace pursuit \cite{DaiMilenkovic08}. Analysis has been developed to shed light on the performances of these practical algorithms. For example, Donoho \cite{Donoho06_4}, Donoho, Elad, and Temlyakov \cite{Donoho06_1}, Cand\`{e}s and Tao \cite{Tao05a}, and Cand\`{e}s, Romberg, and Tao\cite{Tao06} presented sufficient conditions for $\ell_1$-norm minimization algorithms, including basis pursuit and its variant in the noisy setting, to successfully recover the sparse signals with respect to different performance metrics. Tropp \cite{Tropp04}, Tropp and Gilbert \cite{Tropp07}, and Donoho, Tsaig, Drori, and Starck \cite{Donoho06} studied the performances of greedy sequential selection methods such as matching pursuit and its variants. Wainwright \cite{Wainwright09ell1} and Zhao and Yu \cite{ZhaoYu2006} provided sufficient and necessary conditions for Lasso to recover the support of the sparse signal, i.e., the set of indices of the nonzero entries. On the other hand, from an information theoretic perspective, a series of papers, for instance, Wainwright \cite{Wain09b},  Fletcher, Rangan, and Goyal \cite{Fletcher09IT}, Wang, Wainwright, and Ramchandran \cite{Wang08ISIT}, Ak\c{c}akaya and Tarokh \cite{Akcakaya}, Jin, Kim, and Rao \cite{JKR10}, provided sufficient and necessary conditions to characterize the performance limits of optimal algorithms for support recovery, regardless of computational complexity.

\subsection{Background on the MMV Problem}
As a fast emerging trend, the capability of collecting multiple measurements with an array of sensors in an increasing number of applications, such as magnetoencephalography (MEG) and electroencephalography (EEG) \cite{WipfNaga2008, Cichocki}, blind source separation \cite{Cichocki_2}, multivariate regression \cite{OWainJ}, and source localization \cite{Malioutov_sourceLocalization}, gives rise to the problem of sparse signal recovery with multiple measurement vectors. Practical algorithms have been developed to address the new challenges in this scenario. One class of algorithms for solving the MMV problem can be viewed as straightforward extensions based on their counterparts in the SMV problem. To sample a few, M-OMP \cite{RaoCotter02, Tropp_05_ICASSP}, M-FOCUSS \cite{RaoCotter02}, $\ell_1/\ell_2$ minimization method\footnote{This method is sometimes referred to as $\ell_2/\ell_1$ minimization, due to the naming convention in a specific paper. In this paper, we use $\ell_1/\ell_p$ to indicate a cost of a matrix $B$ which is define as $\sum_{i}|(\sum_j|b_{i,j}|^p)^{1/p}|$.} \cite{Eldar_UnionSubspace}, multivariate group Lasso \cite{OWainJ}, and M-SBL \cite{Wipf_2007_b} can be all viewed as examples of this kind. Another class of algorithms additionally make explicit effort to exploit the structure underlying the sparse signal $X$, such as the temporal correlation or the autoregressive nature across the columns of $X$ which would be otherwise unavailable when $l=1$, to aim for better performance of sparse signal recovery. For instance, the improved M-FOCUSS algorithms \cite{Cichocki} and the auto-regressive sparse Bayesian learning (AR-SBL) \cite{ZhilinRao} both have the capability of explicitly taking advantage of the structural properties of $X$ to improve the recovery performance. Along side the algorithmic advancement, a series of work have been focusing on the theoretical analysis to support the effectiveness of existing algorithms for the MMV problem. We briefly divide these results into two categories. The first category of theoretic analysis aims at specific practical algorithms for sparse signal recovery with MMV. For example, Chen and Huo \cite{ChenHuo} discovered the sufficient conditions for $\ell_1/\ell_p$ norm minimization method and orthogonal matching pursuit to exactly recover every sparse signal within certain sparsity level in the noiseless setting. Eldar and Rauhut \cite{EldarRauhut} also analyzed the performance of sparse recovery using the $\ell_1/\ell_2$ norm minimization method in the noiseless setting, but the sparse signal was assumed to be randomly distributed according to certain probability distribution and the performance was averaged over all possible realizations of the sparse signal. Obozinski, Wainwright, and Jordan \cite{OWainJ} provided sufficient and necessary conditions for multivariate group Lasso to successfully recover the support of the sparse signal\footnote{We refer to the support of a matrix $X$ as the set of indices corresponding to the nonzero rows of $X$. It will be formally defined in Section \ref{sec:ProblemFormulation}.}  in the presence of measurement noise. The second category of theoretic analysis are of an information theoretic nature, and  explore the performance limits that any algorithm, regardless of computational complexity could possibly achieve. In this regard, Tang and Nehorai \cite{TangNehorai} employed a hypothesis testing framework with the likelihood ratio test as the optimal decision rule to study how fast the error probability decays. Sufficient and necessary conditions are further identified in order to guarantee successful support recovery in the asymptotic sense.

\subsection{Focus and Contributions of This Paper}
We develop sharp asymptotic performance limits among the signal dimension $m$, the number of nonzero rows $k$, the number of measurements per measurement vector $n$, and the number of measurement vectors $l$ for reliable support recovery in the noisy setting. We show that $n = (\log m)/c(X)$ is sufficient and necessary. We give a complete characterization of $c(X)$ that depends on the elements of the nonzero rows of $X$. Together with interpretations, we demonstrate the potential performance improvement enabled by having MMV, and hence bolster its usage in practical applications. Our main results are inspired by the analogy to communication over a Single-Input Multiple-Output (SIMO) multiple access channel (MAC). According to this connection, the columns of the measurement matrix form a common codebook for all senders. Codewords from the senders are individually multiplied by unknown channel gains, which correspond to nonzero entries of $X$. Then, the noise corrupted linear combinations of these codewords are observed by multiple receivers, which correspond to the multiple measurement vectors. The problem can be viewed as $k$ single-antenna users communicating over a non-frequency selective channel with a base station equipped with $l$ receive antennas. Thus, the problem of support recovery can be interpreted as multiple receivers jointly decoding messages sent by multiple senders, i.e., a SIMO MAC channel. With appropriate modifications, the techniques for deriving multiple-user channel capacity can be leveraged to provide performance limits for support recovery.

In the literatures on sparse signal recovery with SMV, the analogy between the problems of sparse signal recovery and channel coding has been observed from various perspectives in previous work \cite{Bara06}, \cite[Section IV-D]{Tropp06}, \cite[Section II-A]{Wang08ISIT}, \cite[Section III-A]{Akcakaya}, \cite[Section 11.2]{Donoho06}. However, their extensions to the MMV problem are unavailable to the authors' knowledge. Moreover, our approach differs from existing works and would be different form their possible extensions to the MMV scenario, if any. We customize tools from multiple-user information theory to address the support recovery problem and we obtain sharp performance limits in the form of tight sufficient and necessary conditions.

\subsection{Organization of the Paper}
In Section \ref{sec:ProblemFormulation}, we formally define the problem of support recovery of sparse signals in the presence of MMV. To motivate the main results of the paper and their proof techniques, in Section \ref{sec:ConnectionMACMultipleReceiver} we discuss the similarities and differences between the support recovery problem and the multiple access communication problem. The main results of the paper are presented in Section \ref{sec:mainresult}, along with the interpretations. The proofs of the main theorems are presented in Appendices \ref{app:th1} and \ref{app:th2}. Relations to existing work are discussed in Section \ref{sec:relation}. Section \ref{sec:discussion} concludes the paper with further discussions.

\subsection{Notations}
Throughout this paper, a set is a collection of unique objects. Let $\mathbb{R}^m$ denote the $m$-dimensional real
Euclidean space. Let $\mathbf{1}$ denote a column vector whose elements are all $1$'s, and its length can be determined in the context. Let $\mathbb{N}=\{1, 2, 3, ...\}$ denote the set of
natural numbers. Let $[k]$ denote the set $\{1, 2, ...,
k\}$. The notation $|\Sc|$ denotes the cardinality of set $\Sc$,
$\|\xv\|_2$ denotes the $\ell_2$-norm of a vector $\xv$,
and $\|A\|_F$ denotes the Frobenius norm of a matrix $A$. For a matrix $A$, $\Av_i$ denotes its $i$th column, $\underline{A}_i$ denotes its $i$th row, and $\underline{A}_\Tc$ denotes the submatrix formed by the rows of $A$ indexed by the set $\Tc$.

\section{Problem Formulation}
\label{sec:ProblemFormulation}

Let $W\in\mathbb{R}^{k\times l}$, where $w_{i,j}\neq 0$ for all $i, j$. Let $\Sv=[S_1, ..., S_k]^\intercal\in[m]^k$ be such that $S_1$, ..., $S_k$  are chosen uniformly at random from $[m]$ without replacement. In particular, $\{S_1, ...,S_k\}$ is uniformly distributed over all size-$k$ subsets of $[m]$. Then, the signal of interest $X=X(W,\Sv)$ is generated as
\begin{align}
X_{s,i} = \left\{ \begin{array}{ll}
w_{j,i} & \mbox{if $s=S_j$},\\
0 & \mbox{if $s\notin \{S_1, ..., S_k\}$}.\end{array} \right.
\label{signal_model}
\end{align}
The support of $X$, denoted by $\supp(X)$, is the set of indices corresponding to the nonzero rows of $X$, i.e., $\supp(X) = \{S_1,...,S_k\}$. According to the signal model (\ref{signal_model}), $|\supp(X)|= k$. Throughout this paper, we assume $k$ is known.

We measure $X$ through the linear operation
\begin{align}
Y = A X + Z
\label{CS_model}
\end{align}
where $A\in \mathbb{R}^{n\times m}$ is the measurement matrix, $Z\in\mathbb{R}^{n\times l}$ is the measurement noise, and $Y\in\mathbb{R}^{n\times l}$ is the noisy measurement. We assume that the elements of $A$ are independent and identically distributed (i.i.d.) according to the Gaussian distribution $\mathcal{N}(0, \sigma_a^2)$, and the noise $Z_{i,j}$ are  i.i.d. according to  $\mathcal{N}(0, \sigma_z^2)$. We assume $\sigma_a^2$ and $\sigma_z^2$ are known.

Upon observing the noisy measurement $Y$, the goal is to recover the indices of the nonzero rows of $X$. A support recovery map is defined as
\begin{align}
d: \mathbb{R}^{n\times l} \longmapsto 2^{[m]}.
\label{recon_algorithm}
\end{align}

Given the signal model (\ref{signal_model}), the measurement model (\ref{CS_model}), and the support recovery map (\ref{recon_algorithm}), we define the average probability of error by
\[
\P\{d(Y)\neq \textmd{supp}(X(W, \Sv))\}
\]
for each (unknown) signal value matrix $W\in\mathbb{R}^{k\times l}$. Note that the probability is averaged over the randomness of the locations of the nonzero rows $\Sv$, the measurement matrix $A$, and the measurement noise $Z$.

\section{Interpretation of Support Recovery via Multiple-User Communication}
\label{sec:ConnectionMACMultipleReceiver}
We introduce an important interpretation of the problem of support recovery of sparse signals by relating it to a single-input multiple-output (SIMO) multiple access channel (MAC) communication problem. This relationship motivates the intuition behind our main results and facilities the development of the proof techniques. It can be also viewed as an MMV extension of our earlier work \cite{JKRIT10}, in which a similar connection was employed to interpret the support recovery problem with SMV.

\subsection{Brief Review on SIMO MAC}
Consider the following wireless communication scenario. Suppose $k$ senders wish to transmit information to a set of $l$ common receivers. Each sender $i$ has access to a codebook $\mathscr{C}^{(i)} = \{\mathbf{c}_1^{(i)}, \mathbf{c}_2^{(i)},..., \mathbf{c}_{m^{(i)}}^{(i)}\}$, where $\mathbf{c}_j^{(i)}\in\mathbb{R}^n$ is a codeword and $m^{(i)}$ is the number of codewords in the codebook. The rate for sender $i$ is $R^{(i)} = (\log m^{(i)})/n$. To transmit information, each sender chooses a codeword from its codebook, and all senders transmit their codewords simultaneously to  $l$ receivers leading to the SIMO MAC problem:
\begin{align}
Y_{j,i} =  h_{j,1} X_{1,i} + h_{j,2} X_{2,i}  + \cdots +h_{j,k} X_{k,i} + Z_{j,i},\quad i=1, 2,...,n\textrm{, and }j =1, 2, ...,l
\label{MAC_channel}
\end{align}
where $X_{q, i}$ denotes the input symbol from sender $q$ to the channel at the $i$th use of the channel, $h_{j, q}$ denotes the channel gain between sender $q$ and receiver $j$, $Z_{j, i}$ is the additive Gaussian noise i.i.d. according to $\mathcal{N}(0, \sigma_z^2)$, and $Y_{j, i}$ is the channel output at receiver $j$ at the $i$th use of the channel.

After receiving $Y_{j,1,}, ...,Y_{j,n}$ at each receiver $j\in[l]$, the receivers work jointly to determine the codewords transmitted by each sender. Since the senders interfere with each other, there is an inherent tradeoff among their operating rates. The notion of capacity region is introduced to capture this tradeoff by characterizing all possible rate tuples $(R^{(1)}, R^{(2)}, ..., R^{(k)})$ at which reliable communication can be achieved with diminishing error probability of decoding. By assuming each sender obeys the power constraint $
\|\mathbf{c}_j^{(i)}\|^2/n\leq \sigma_c^2
$
for all $j\in[m^{(i)}]$ and all $i\in[k]$, the capacity region of a SIMO MAC with known channel gains \cite{GoldsmithCapacity} is
\begin{align}
\left\{(R^{(1)},...,R^{(k)}):\sum_{i\in \Tc} R^{(i)} \leq \frac{1}{2}\log \left(I+\frac{\sigma_c^2}{\sigma_z^2}\sum\limits_{i\in \Tc} \hv_i\hv_i^\intercal
\right), \forall ~\Tc \subseteq [k]\right\}
\label{Kusecaparegion}
\end{align}
where $\hv_i \triangleq [h_{1, i}, ..., h_{l, i}]^\intercal$ for $i\in [k]$.

\subsection{Similarities and Differences to the Problem of Support Recovery}
Based on the measurement model (\ref{CS_model}), we can remove the columns in $A$ which correspond to the zero rows of $X$, and obtain the following effective form of the measurement procedure
\begin{align}
\Yv_j =  X_{S_1, j} \Av_{S_1} + \cdots + X_{S_k, j} \Av_{S_k} + \Zv_{j}
\label{CS_peeloff}
\end{align}
for $j\in[l]$. By contrasting (\ref{CS_peeloff}) to the SIMO MAC (\ref{MAC_channel}), we can draw the following key connections that relate the two problems \cite{Jin08}.
\begin{enumerate}
\item[i)] {\bf A nonzero entry as a sender}: We can view the existence of a nonzero row index $S_i$ as sender $i$ that accesses the channel. Since there are $k$ nonzero entries, this results in $k$ users leading to the MAC analogy.

\item[ii)] {\bf A measurement vector as a receiver}: We can view the existence of a measurement vector $\Yv_j$ as a measurement at receiver $j$. The multiple receivers leads to the multiple output (MO) part of the analogy.

\item[iii)] {\bf $X_{S_i, j}$ as the channel gain}: The nonzero entry $X_{S_i, j}$, i.e., $w_{i, j}$, plays the role of the channel gain $h_{j, i}$ from the $i$th sender to the $j$th receiver.

\item[iv)]  {\bf  $\Av_i$ as the codeword}: We treat the measurement matrix $A$ as a codebook with each column
$\Av_i$, $i\in [m]$, as a codeword. Each element of $\Av_{S_i}$ is fed one by one through the channel as input symbols for the $i$th sender to the $l$ receivers, resulting in $n$ uses of the channel. Since a users transmits a single stream, this leads to
the single input (SI) part of the analogy.

\item[v)] {\bf Similarity of objectives}: In the problem of sparse signal recovery, we focus on finding the support $\{S_1,...,S_k\}$ of the signal. In the problem of MAC communication, the receiver needs to determine the indices of codewords, i.e., $S_1,...,S_k$, that are transmitted by the senders.
\end{enumerate}

Based on the abovementioned aspects, the two problems share significant similarities which enable leveraging the information theoretic methods for the SIMO MAC problem for the performance analysis of support recovery of sparse signals. However, there are domain specific differences between the support recovery problem and the channel coding problem that should be addressed accordingly to rigorously apply the information theoretic approaches \cite{JKRIT10}.

\begin{enumerate}
\item\textbf{Common codebook}: In MAC communication, each sender uses its own codebook. However, in sparse signal recovery, the ``codebook'' $A$ is shared by all ``senders''. All senders choose their codewords from the same codebook and hence operate at the same rate. Different senders will not choose the same codeword, or they will collapse into one sender.

\item\textbf{Unknown channel gains}: In MAC communication, the capacity region (\ref{Kusecaparegion}) is valid assuming that the receiver knows the channel gain $h_i$ \cite{Tse05}. In contrast, for sparse signal recovery problem, $X_{S_i}$ is actually unknown and needs to be estimated. Although coding techniques and capacity results are available for communication with channel uncertainty, a closer examination indicates that those results are not directly applicable to our problem. For instance, channel training with pilot symbols is a common practice to combat channel uncertainty \cite{Hassibi00howmuch}. However, it is not obvious how to incorporate the training procedure into the measurement model (\ref{CS_model}), and hence the related results are not directly applicable.
\end{enumerate}

Once these differences are properly accounted for, the connection between the problems of sparse signal recovery and channel coding makes available a variety of information theoretic tools for handling performance issues pertaining to the support recovery problem. Based on techniques that are rooted in channel capacity results, but suitably modified to deal with the differences, we present the main results of this paper in the next section.

\section{Main Results and Their Interpretations}
\label{sec:mainresult}
\subsection{Main Results}
We consider the support recovery of a sequence of sparse signals generated with the same signal value matrix $W$. In particular, we assume that $k$ and $l$ are fixed. Define the auxiliary quantity
\begin{align}
c(W)\triangleq \min_{\Tc\subseteq [k]}\left[\frac{1}{2|\Tc|}\log\det\left(I + \frac{\sigma_a^2}{\sigma_z^2}\underline{W}_{\Tc}^\intercal \underline{W}_{\Tc}\right)  \right].
\end{align}
The following two theorems summarize the main results. The proofs are presented in Appendices A and B.

\begin{theorem}
\label{theorem1}
If
\begin{align}
\limsup_{m\rightarrow\infty}\frac{\log m}{n_m} < c(W)
\end{align}
then there exists a sequence of support recovery maps $\{d^{(m)}\}_{m=k}^\infty,d^{(m)}:\mathbb{R}^{n_m\times l}\mapsto 2^{[m]}$, such that
\begin{align}
\lim_{m\rightarrow \infty}\P\{d(Y)\neq \textmd{supp}(X(W, \Sv))\} =0.
\end{align}
\end{theorem}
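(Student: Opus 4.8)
The plan is to transport the random-coding achievability proof for the SIMO MAC of Section~\ref{sec:ConnectionMACMultipleReceiver} to the support-recovery setting, using the correspondence in which column $\Av_i$ of $A$ plays the role of a codeword, row $\underline{W}_i$ of $W$ that of the channel-gain vector of user $i$, and treating the common-codebook feature (all $k$ ``users'' draw from the same $A$) as in the SMV case \cite{JKRIT10}. Since $d^{(m)}$ may be designed for the fixed matrix $W$, I take it to be a joint-typicality decoder: for each size-$k$ set $\Uc\subseteq[m]$ and each bijection $\phi:\Uc\to[k]$, call $(\Uc,\phi)$ \emph{admissible} if $\big(\{\Av_u\}_{u\in\Uc},\Yv\big)$ is $\epsilon$-typical for the law under which $\{\Av_u\}$ are i.i.d.\ $\Ncal(\mathbf 0,\sigma_a^2 I_{n_m})$ and $\Yv\mid\{\Av_u\}\sim \Ncal\big(\sum_{u\in\Uc}\Av_u\,\underline{W}_{\phi(u)},\,\sigma_z^2 I\big)$; the decoder outputs the set of some admissible pair and declares an error if none exists or if two admissible pairs carry different sets.

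By permutation invariance of the i.i.d.\ Gaussian matrix $A$ and of the uniform random support $\Sv$, it suffices to bound the conditional error probability given $\supp(X)=\{1,\dots,k\}$ with column $j$ carrying row $j$ of $W$. Put $\rho_m\triangleq(\log m)/n_m$. A direct AEP estimate shows the true pair is admissible with probability $\to 1$, so an error forces an admissible $(\Uc,\phi)$ with $\Uc\neq\{1,\dots,k\}$; let $j=|\Uc\setminus\{1,\dots,k\}|\in\{1,\dots,k\}$, so there are at most $\binom{k}{j}\binom{m-k}{j}\le 2^k m^{j}$ such sets and at most $k!$ bijections per set. For a fixed admissible candidate, the $j$ ``fresh'' columns indexed by $\Uc\setminus\{1,\dots,k\}$ are independent of $\Yv$ and of all other columns; $\epsilon$-typicality forces them --- each having i.i.d.\ rows $\Ncal(\mathbf 0,\sigma_a^2\underline{W}_{\Tc}^\intercal\underline{W}_{\Tc})$, where $\Tc$ (with $|\Tc|=j$) is the set of rows of $W$ that $\phi$ assigns to them --- to explain the residual measurement up to noise, and the packing/AEP lemma for the $l$-output vector Gaussian channel yields
\[
\P\big(\text{candidate is $\epsilon$-typical}\big)\ \le\ \exp\!\big(-n_m(I_{\Tc}-\delta(\epsilon))\big),\qquad I_{\Tc}\triangleq\tfrac12\log\det\!\Big(I+\tfrac{\sigma_a^2}{\sigma_z^2}\underline{W}_{\Tc}^\intercal\underline{W}_{\Tc}\Big),
\]
$I_{\Tc}$ being precisely the per-symbol conditional mutual information $I(\{\Av_i\}_{i\in\Tc};\Yv\mid\{\Av_i\}_{i\notin\Tc})$ for Gaussian inputs; any mismatch of $\phi$ on the $k-j$ in-support columns only adds energy to the residual and can only lower this probability.

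Union-bounding over $(\Uc,\phi)$ and grouping by $j$ gives
\[
\P\{d(\Yv)\neq\supp(X)\}\ \le\ o(1)+\sum_{j=1}^{k} 2^k\, k!\,\exp\!\Big(n_m\big(j\rho_m-\min_{|\Tc|=j}I_{\Tc}+\delta(\epsilon)\big)\Big).
\]
By hypothesis $\limsup_m\rho_m<c(W)=\min_{\emptyset\neq\Tc\subseteq[k]}I_{\Tc}/|\Tc|$, so there is $\gamma>0$ with $j\rho_m\le \min_{|\Tc|=j}I_{\Tc}-\gamma$ for every $j\in\{1,\dots,k\}$ and all large $m$; picking $\epsilon$ with $\delta(\epsilon)<\gamma/2$ makes each of the finitely many summands at most $e^{-n_m\gamma/2}$, and since $\rho_m$ is eventually bounded while $m\to\infty$ we have $n_m\to\infty$, so $\P\{d(\Yv)\neq\supp(X)\}\to 0$, as claimed.

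The step I expect to be the main obstacle is the per-candidate estimate: proving the packing/AEP bound for the $l$-output vector Gaussian channel and pinning the exponent down to exactly $\tfrac12\log\det(I+\tfrac{\sigma_a^2}{\sigma_z^2}\underline{W}_{\Tc}^\intercal\underline{W}_{\Tc})$ --- this is where the rank, hence the structure, of $\underline{W}_{\Tc}$ enters, since the determinant responds only to the at most $\min(|\Tc|,l)$ nonzero eigenvalues of $\underline{W}_{\Tc}^\intercal\underline{W}_{\Tc}$. A secondary difficulty with no analogue in the textbook MAC proof is the common-codebook bookkeeping: one must check that ``reassignment'' candidates --- which reuse true columns under a wrong $\phi$ --- contribute only a constant (not exponential) number of extra hypotheses and never improve the exponent, so that the clean threshold $c(W)$ is preserved.
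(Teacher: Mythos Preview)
Your outline is correct and reaches the right threshold, but it takes a genuinely different route from the paper in one structural respect: you let $d^{(m)}$ be designed for the fixed $W$ (formally admissible, since the theorem is stated for a given $W$), whereas the paper explicitly refuses this---``unknown channel gains'' is listed in Section~\ref{sec:ConnectionMACMultipleReceiver} as one of the two obstacles that separate support recovery from textbook SIMO-MAC decoding, and the achievability proof is engineered so that the decoder never sees $W$. Concretely, the paper's decoder (i) estimates each $\|\wv_i\|_2$ from $\|\Yv_i\|_2^2$, (ii) lays an $\epsilon$-net $\Qc_i$ over the ball of that radius in $\mathbb R^k$, and (iii) outputs the unique size-$k$ index set whose columns, paired with \emph{some} tuple of net points, drive the residual Frobenius norm below $\sigma_z^2+\epsilon^2\sigma_a^2$. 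The per-candidate bound is then not a packing/AEP citation but a bespoke Chernoff inequality (Lemma~\ref{lemma1}: for fixed $B$ and Gaussian $D$, $\P(\tfrac{1}{nl}\|B-D\|_F^2\le\gamma)\le 2^{-\frac{n}{2}\log(\alpha^l/\gamma^l)}$ with $\alpha$ the geometric mean of the diagonal of $\tfrac1nB^\intercal B$), combined with Hadamard's inequality and a linear-algebra step (Lemma~\ref{lemma3}) to extract $\det(\underline W_\Tc^\intercal\underline W_\Tc+\tfrac{\sigma_z^2}{\sigma_a^2}I)$; the net contributes only the constant factor $\prod_i|\Qc_i|$, harmless since $k,l$ are fixed. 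Your route is shorter and stays within off-the-shelf Gaussian joint typicality; the paper's is heavier but delivers a $W$-oblivious decoder, which is operationally the stronger statement. On the common-codebook/reassignment issue you flag, the paper makes your ``only adds energy'' heuristic rigorous via the inequality $\det(B+D)\ge\det(B)$ for positive semidefinite $B,D$, which is exactly how the misassigned rows $\underline W'_\Tc$ are discarded from the residual-energy determinant.
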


\begin{theorem}
If
\begin{align}
\limsup_{m\rightarrow\infty}\frac{\log m}{n_m} > c(W)
\end{align}
then for any sequence of support recovery maps $\{d^{(m)}\}_{m=k}^\infty,d^{(m)}:\mathbb{R}^{n_m\times l}\mapsto 2^{[m]}$, \begin{align}
\liminf_{m\rightarrow \infty}\P\{d(Y)\neq \textmd{supp}(X(W, \Sv))\} >0.
\end{align}
\end{theorem}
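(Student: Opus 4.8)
The plan is to establish this converse by a genie-aided argument modeled on the capacity converse for the multiple-access channel, using the correspondence of Section~\ref{sec:ConnectionMACMultipleReceiver}: Fano's inequality will convert a vanishing error probability into a lower bound on a mutual information, a single-letterization over the $n_m$ rows of the measurement will supply a matching upper bound, and the two will collide exactly at $c(W)$. Because handing the decoder more side information only decreases its error probability, it suffices to produce side information and a genie-aided decoder whose error stays bounded away from $0$; in particular I would give the decoder $A$ (which is the intended reading of the model, and is harmless since providing $A$ only strengthens a converse — without $A$ the conditional law of $Y$ given $\supp(X)$ does not depend on $\supp(X)$ at all). First I would fix $\delta>0$ and, using the hypothesis, pass to an infinite set $\mathcal M$ of indices $m$ along which $\log m>(c(W)+\delta)\,n_m$; it then suffices to bound $p_m\triangleq\P\{d^{(m)}(Y)\neq\supp(X)\}$ away from $0$ for $m\in\mathcal M$.

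Next I would let $\Tc^\star$ attain the minimum defining $c(W)$, put $t=|\Tc^\star|$, so that $\tfrac12\log\det\!\big(I+\tfrac{\sigma_a^2}{\sigma_z^2}\underline{W}_{\Tc^\star}^\intercal\underline{W}_{\Tc^\star}\big)=t\,c(W)$, and have the genie reveal, besides $A$, the locations $(S_j)_{j\in[k]\setminus\Tc^\star}$ of the ``senders'' outside $\Tc^\star$; write $\mathcal G$ for this side information. Subtracting the now-known contribution of those senders leaves the reduced observation $\tilde Y=\sum_{j\in\Tc^\star}\Av_{S_j}\underline{W}_j+Z$, and the decoder must still identify the set $T\triangleq\{S_j:j\in\Tc^\star\}$, which given $\mathcal G$ is uniform over the $\binom{m-k+t}{t}$ size-$t$ subsets of $[m]\setminus\{S_j:j\notin\Tc^\star\}$, so $H(T\mid\mathcal G)=\log\binom{m-k+t}{t}=t\log m-O(1)$. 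Since $\hat T\triangleq d^{(m)}(Y)\setminus\{S_j:j\notin\Tc^\star\}$ is a function of $(\tilde Y,\mathcal G)$ and coincides with $T$ whenever $d^{(m)}(Y)=\supp(X)$, Fano's inequality gives $H(T\mid\tilde Y,\mathcal G)\le 1+p_m\log\binom{m-k+t}{t}$, and therefore
\[
I(T;\tilde Y\mid\mathcal G)=H(T\mid\mathcal G)-H(T\mid\tilde Y,\mathcal G)\ \ge\ (1-p_m)\,t\log m-O(1).
\]

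For the matching upper bound, I would pass from $T$ to the ordered tuple $\Sv_{\Tc^\star}=(S_j)_{j\in\Tc^\star}$ (of which $T$ is a deterministic function, so mutual information can only grow), note that $H(\tilde Y\mid\Sv_{\Tc^\star},\mathcal G)=H(Z)$ because given the ordered tuple and $A$ the quantity $\tilde Y-\sum_{j\in\Tc^\star}\Av_{S_j}\underline{W}_j$ is simply $Z$, and then use sub-additivity of (differential) entropy over the $n_m$ rows together with ``conditioning reduces entropy'':
\[
I(T;\tilde Y\mid\mathcal G)\ \le\ I(\Sv_{\Tc^\star};\tilde Y\mid\mathcal G)\ \le\ \sum_{i=1}^{n_m}\big(H(\underline{\tilde Y}_i\mid A)-H(\underline{Z}_i)\big).
\]
The key point is that, conditioned on \emph{any} value of the support, the $t$ entries of $A$ that enter row $i$ are i.i.d.\ $\Ncal(0,\sigma_a^2)$, so the marginal law of $\underline{\tilde Y}_i$ is \emph{exactly} $\Ncal\!\big(0,\sigma_a^2\underline{W}_{\Tc^\star}^\intercal\underline{W}_{\Tc^\star}+\sigma_z^2 I_l\big)$; maximality of Gaussian differential entropy then bounds each summand by $\tfrac12\log\det\!\big(I_l+\tfrac{\sigma_a^2}{\sigma_z^2}\underline{W}_{\Tc^\star}^\intercal\underline{W}_{\Tc^\star}\big)=t\,c(W)$, so $I(T;\tilde Y\mid\mathcal G)\le n_m\,t\,c(W)$. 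Comparing this with the Fano bound, dividing by $t\log m$, and invoking $n_m<\log m/(c(W)+\delta)$ on $\mathcal M$ would force $(1-p_m)-o(1)\le c(W)/(c(W)+\delta)<1$, i.e.\ $p_m$ bounded below by a fixed positive constant for large $m\in\mathcal M$, which is the assertion.

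The routine bookkeeping I have glossed over includes the $O(1)$ corrections coming from ordered-versus-unordered support tuples and from the without-replacement sampling, and checking that $\hat T$ is measurable with respect to $(\tilde Y,\mathcal G)$. The step I expect to be the real crux — the analogue of partitioning the users in the MAC converse — is revealing exactly the complement of the \emph{minimizing} set $\Tc^\star$; this, together with the fact that in this model the per-row observed vector is exactly Gaussian (so the single-letter mutual-information bound is tight, not merely an inequality), is what makes the converse align with the sufficiency threshold of Theorem~\ref{theorem1} and is where the rank of $W$ enters the picture. The only other delicate point is the modeling one already noted, namely that the decoder must be given $A$ for the problem to be non-degenerate.
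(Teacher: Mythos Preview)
Your argument is correct and follows the same Fano-plus-single-letterization template as the paper's proof in Appendix~\ref{app:th2}, but the execution differs in a way worth noting. The paper actually proves a stronger statement: for \emph{any deterministic} sequence $A^{(m)}$ satisfying the power constraint $\tfrac{1}{n_m m}\|A^{(m)}\|_F^2\le\sigma_a^2$, vanishing error forces $\limsup\tfrac{\log m}{n_m}\le c(W)$; the random-$A$ theorem then follows by a standard existence argument. Because $A$ is fixed in the paper's proof, the per-row bound on $h(\underline{\Yv}_i\mid S(\Tc^c))$ requires computing the covariance of $A_{i,S(\Tc)}$ (where the only randomness is in $S(\Tc)$) in terms of the empirical first and second moments of row $i$ of $A$, invoking Gaussian max-entropy, and then averaging across rows via concavity of $\log\det$ together with the power constraint. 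Your route is shorter: since $A$ is i.i.d.\ Gaussian in the model, after dropping the conditioning on $A$ (which only increases entropy), each $\underline{\tilde Y}_i$ is \emph{exactly} $\Ncal\!\big(0,\sigma_a^2\underline{W}_{\Tc^\star}^\intercal\underline{W}_{\Tc^\star}+\sigma_z^2 I_l\big)$ and the bound is immediate, with no concavity step and no separate power-constraint argument. The paper's approach buys robustness (the converse holds for any measurement ensemble meeting the power budget, not just Gaussian); yours buys simplicity for the model as stated.

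Two small exposition wobbles. First, you wrote the summand as $H(\underline{\tilde Y}_i\mid A)$ but then argued about the marginal law of $\underline{\tilde Y}_i$ unconditional on $A$; insert the step $H(\underline{\tilde Y}_i\mid A)\le H(\underline{\tilde Y}_i)$ explicitly, after which the Gaussianity claim applies and ``maximality of Gaussian differential entropy'' is in fact an equality. Second, the sentence ``it suffices to produce side information and a genie-aided decoder whose error stays bounded away from $0$'' says the opposite of what you intend: you must show that \emph{every} decoder, even with the genie, has error bounded away from $0$ --- which is precisely what your Fano argument delivers for the arbitrary $d^{(m)}$.
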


Theorems 1 and 2 together indicate that $n = \frac{1}{c(W)\pm \epsilon}\log m$ is the sufficient and necessary number of measurements per measurement vector to ensure asymptotically successful support recovery. The constant $c(W)$ explicitly captures the role of the nonzero entries in the performance tradeoff.

\subsection{Interpretations of the Main Results}
We further explore the implications of having multiple measurement vectors. Due to the complicated nature of the expression for $c(W)$, we will employ different approximations to make the interpretations more accessible.

\medskip
\subsubsection{The Low-Noise-Level Scenario}
We consider the case where ${\sigma_z^2}$ is sufficiently small. Let $\lambda_{\Tc, i},\Tc\subseteq[k]$, denote the $i$th largest eigenvalue of $\underline{W}_{\Tc}^\intercal \underline{W}_{\Tc}$. For a SIMO MAC problem, the sum capacity grows as $\min(k, l)$ leading to significant gains in the task of support recovery. This is captured in the following corollary.
\begin{corollary}
\label{corollary1}
For a given $W$, suppose $\rank(\underline{W}_{\Tc}^\intercal \underline{W}_{\Tc})=\min(|\Tc|, l)$ for all $\Tc\subseteq[k]$. For sufficiently small $\sigma_z^2>0$, there exists a constant $\alpha\in(0, 1)$ such that if
\begin{align}
\lim_{ m\rightarrow \infty}\frac{\log m}{n_m} &< \alpha \cdot \frac{\min(k,l)}{2k}\cdot \log \frac{\sigma_a^2}{\sigma_z^2}\label{suffCor1}
\end{align}
then there exists a sequence of support recovery maps $\{d^{(m)}\}_{m=k}^\infty,d^{(m)}:\mathbb{R}^{n_m\times l}\mapsto 2^{[m]}$, such that
\begin{align}
\lim_{m\rightarrow \infty}\P\{d(Y)\neq \textmd{supp}(X(W, \Sv))\} =0.\nonumber
\end{align}
\end{corollary}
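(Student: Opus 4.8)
The plan is to deduce Corollary~\ref{corollary1} directly from Theorem~\ref{theorem1}: it would suffice to produce an $\alpha\in(0,1)$ and a threshold $\sigma_0^2>0$ such that
\[
c(W)\ \ge\ \alpha\cdot\frac{\min(k,l)}{2k}\log\frac{\sigma_a^2}{\sigma_z^2}\qquad\text{whenever}\ \ 0<\sigma_z^2<\sigma_0^2 ,
\]
since then the hypothesis (\ref{suffCor1}) gives $\limsup_{m\to\infty}(\log m)/n_m=\lim_{m\to\infty}(\log m)/n_m<c(W)$, and Theorem~\ref{theorem1} furnishes the sequence of support recovery maps. I would fix $\alpha=1/2$ once and for all, and establish the displayed bound by showing that $c(W)$ behaves like $\frac{\min(k,l)}{2k}\log\frac{\sigma_a^2}{\sigma_z^2}$ as $\sigma_z^2\downarrow0$.

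The first step is to write each term of the minimum defining $c(W)$ in spectral form. For a nonempty $\Tc\subseteq[k]$, the $l\times l$ positive-semidefinite matrix $\underline{W}_\Tc^\intercal\underline{W}_\Tc$ has, by the rank hypothesis, exactly $r_\Tc:=\min(|\Tc|,l)$ positive eigenvalues $\lambda_{\Tc,1}\ge\dots\ge\lambda_{\Tc,r_\Tc}>0$, and therefore
\[
\frac{1}{2|\Tc|}\log\det\!\left(I+\frac{\sigma_a^2}{\sigma_z^2}\underline{W}_\Tc^\intercal\underline{W}_\Tc\right)=\frac{1}{2|\Tc|}\sum_{i=1}^{r_\Tc}\log\!\left(1+\frac{\sigma_a^2}{\sigma_z^2}\lambda_{\Tc,i}\right).
\]
The second step is a lower bound that is uniform in $\Tc$. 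Put $C:=\max_{\emptyset\ne\Tc\subseteq[k]}\bigl|\frac{1}{2|\Tc|}\sum_{i=1}^{r_\Tc}\log\lambda_{\Tc,i}\bigr|$, which is finite precisely because the rank hypothesis forces every $\lambda_{\Tc,i}$ to be strictly positive (and there are only finitely many $\Tc$). Assume $\sigma_z^2<\sigma_a^2$, so that $\log\frac{\sigma_a^2}{\sigma_z^2}>0$. Using $\log(1+x)\ge\log x$ for $x>0$, together with the elementary observation that $r_\Tc/(2|\Tc|)=\min(|\Tc|,l)/(2|\Tc|)\ge\min(k,l)/(2k)$ (the map $t\mapsto\min(t,l)/t$ equals $1$ for $t\le l$ and decreases like $l/t$ afterwards, so on $\{1,\dots,k\}$ it is minimized at $t=k$), I obtain
\begin{align*}
\frac{1}{2|\Tc|}\sum_{i=1}^{r_\Tc}\log\!\left(1+\frac{\sigma_a^2}{\sigma_z^2}\lambda_{\Tc,i}\right)&\ge\frac{r_\Tc}{2|\Tc|}\log\frac{\sigma_a^2}{\sigma_z^2}+\frac{1}{2|\Tc|}\sum_{i=1}^{r_\Tc}\log\lambda_{\Tc,i}\\
&\ge\frac{\min(k,l)}{2k}\log\frac{\sigma_a^2}{\sigma_z^2}-C .
\end{align*}
Taking the minimum over $\Tc$ gives $c(W)\ge\frac{\min(k,l)}{2k}\log\frac{\sigma_a^2}{\sigma_z^2}-C$.

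The last step is to choose the noise threshold. Since $\log\frac{\sigma_a^2}{\sigma_z^2}\to\infty$ as $\sigma_z^2\downarrow0$, there is $\sigma_0^2\in(0,\sigma_a^2)$ with $\frac{1}{2}\cdot\frac{\min(k,l)}{2k}\log\frac{\sigma_a^2}{\sigma_z^2}\ge C$ for all $0<\sigma_z^2<\sigma_0^2$. For such $\sigma_z^2$,
\begin{align*}
c(W)\ &\ge\ \frac{\min(k,l)}{2k}\log\frac{\sigma_a^2}{\sigma_z^2}-C\\
&\ge\ \frac{1}{2}\cdot\frac{\min(k,l)}{2k}\log\frac{\sigma_a^2}{\sigma_z^2}\ =\ \alpha\cdot\frac{\min(k,l)}{2k}\log\frac{\sigma_a^2}{\sigma_z^2},
\end{align*}
which is the bound sought in the opening paragraph, and Corollary~\ref{corollary1} then follows from Theorem~\ref{theorem1}.

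I do not expect a substantive obstacle: the corollary is essentially the statement that the constant $c(W)$ of Theorem~\ref{theorem1} has leading term $\frac{\min(k,l)}{2k}\log\frac{\sigma_a^2}{\sigma_z^2}$ in the low-noise limit. The one point that needs care is that $c(W)$ is itself a minimum over $\Tc$, so the lower bound of the second step has to hold \emph{uniformly} in $\Tc$ — and that is exactly where the hypothesis $\rank(\underline{W}_\Tc^\intercal\underline{W}_\Tc)=\min(|\Tc|,l)$ enters: it guarantees both that every $\Tc$ contributes the maximal number $\min(|\Tc|,l)$ of eigenvalue factors that diverge like $\log\frac{\sigma_a^2}{\sigma_z^2}$, and that the residual terms $\log\lambda_{\Tc,i}$ stay finite, so that $C<\infty$. (If a sharper statement were desired, running the same argument with $1/2$ replaced by any fixed $\alpha<1$, and $\sigma_0^2$ shrunk accordingly, shows that $\alpha$ can be taken arbitrarily close to $1$ as $\sigma_z^2\downarrow0$.)
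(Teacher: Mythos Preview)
Your proposal is correct and follows essentially the same approach as the paper's proof: both diagonalize $\underline{W}_\Tc^\intercal\underline{W}_\Tc$, use the rank hypothesis to ensure $\min(|\Tc|,l)$ strictly positive eigenvalues, extract the leading $\log(\sigma_a^2/\sigma_z^2)$ term from each, bound the residual $\log\lambda_{\Tc,i}$ contributions uniformly over the finitely many $\Tc$, and invoke Theorem~\ref{theorem1}. Your presentation is slightly tidier in two places---you handle all $\Tc$ at once via the inequality $\min(|\Tc|,l)/|\Tc|\ge\min(k,l)/k$ rather than splitting into the cases $|\Tc|\le l$ and $|\Tc|>l$ as the paper does, and you fix $\alpha=1/2$ explicitly rather than leaving it as $\min_\Tc\alpha_\Tc$---but these are cosmetic differences, not a different route.
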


\begin{proof}
Note that for $\Tc\subseteq[k]$ with $|\Tc|\leq l$, $\lambda_{\Tc, i} >0$ for $i=1, 2, ..., |\Tc|$. Thus
\begin{align}
\frac{1}{2|\Tc|}\log\det\left(I + \frac{\sigma_a^2}{\sigma_z^2}\underline{W}_{\Tc}^\intercal \underline{W}_{\Tc}\right)&= \frac{1}{2|\Tc|}\log \prod_{i=1}^{|\Tc|}\left(1+\frac{\sigma_a^2}{\sigma_z^2}\lambda_{\Tc, i}\right)\nonumber\\
&= \frac{1}{2|\Tc|}\log \prod_{i=1}^{|\Tc|}\left(\frac{\sigma_a^2}{\sigma_z^2}\left(\frac{\sigma_z^2}{\sigma_a^2}+\lambda_{\Tc, i}\right)\right)\nonumber\\
&=  \frac{1}{2|\Tc|}\left[{|\Tc|}\cdot \log \frac{\sigma_a^2}{\sigma_z^2} + \sum_{i=1}^{|\Tc|}\log \left(\frac{\sigma_z^2}{\sigma_a^2}+\lambda_{\Tc, i}\right)\right]\nonumber\\
&= \frac{1}{2} \log \frac{\sigma_a^2}{\sigma_z^2}\cdot\left(1 + \frac{1}{|\Tc|}\sum_{i=1}^{|\Tc|}\frac{\log \left(\frac{\sigma_z^2}{\sigma_a^2}+\lambda_{\Tc, i}\right)}{\log \frac{\sigma_a^2}{\sigma_z^2}}\right)\nonumber\\
&= \frac{1}{2} \log \frac{\sigma_a^2}{\sigma_z^2}\cdot\left(1 + O\left(\frac{1}{-\log \sigma_z^2}\right)\right)\nonumber\\
&\geq \frac{1}{2} \log \frac{\sigma_a^2}{\sigma_z^2}\cdot \alpha_\Tc\nonumber
\end{align}
for some $\alpha_\Tc\in(0, 1)$. For any possible $\Tc\subseteq[k]$ with $|\Tc| > l$, $\lambda_{\Tc, i} >0$ for $i=1, 2, ..., l$. Then, we have similarly
\begin{align}
\frac{1}{2|\Tc|}\log\det\left(I + \frac{\sigma_a^2}{\sigma_z^2}\underline{W}_{\Tc}^\intercal \underline{W}_{\Tc}\right)
&=  \frac{l}{2|\Tc|} \log \frac{\sigma_a^2}{\sigma_z^2}\cdot\left(1 + O\left(\frac{1}{-\log \sigma_z^2}\right)\right).\nonumber\\
&\geq \frac{l}{2|\Tc|} \log \frac{\sigma_a^2}{\sigma_z^2}\cdot \alpha_\Tc.\nonumber
\end{align}
Thus, if $k\leq l$
\begin{align}
\min_{\Tc\subseteq [k]}\left[\frac{1}{2|\Tc|}\log\det\left(I + \frac{\sigma_a^2}{\sigma_z^2}\underline{W}_{\Tc}^\intercal \underline{W}_{\Tc}\right)  \right] \geq \frac{1}{2} \log \frac{\sigma_a^2}{\sigma_z^2}\cdot\min_{\Tc\subseteq [k]}\alpha_\Tc\label{llarger}
\end{align}
and if $k>l$
\begin{align}
\min_{\Tc\subseteq [k]}\left[\frac{1}{2|\Tc|}\log\det\left(I + \frac{\sigma_a^2}{\sigma_z^2}\underline{W}_{\Tc}^\intercal \underline{W}_{\Tc}\right)  \right] \geq \frac{l}{2k} \log \frac{\sigma_a^2}{\sigma_z^2}\cdot\min_{\Tc\subseteq [k]}\alpha_\Tc.\label{klarger}
\end{align}
Combining (\ref{llarger}) and (\ref{klarger}) and applying Theorem \ref{theorem1} complete the proof.
\end{proof}
\medskip

Corollary \ref{corollary1} indicates the following observations. First, as the measurement noise level $\sigma_z^2$ approaches zero, the term $\frac{\min(k,l)}{2k} \log\frac{\sigma_a^2}{\sigma_z^2}$ exerts a major influence on the sufficient condition (\ref{suffCor1}). The nonzero signal matrix $W$ plays its role mainly through the ranks of its row-wise submatrices, which are ensured to be full rank according to the technical assumption that $\rank(\underline{W}_{\Tc}^\intercal \underline{W}_{\Tc})=\min(|\Tc|, l)$ for any $\Tc\subseteq[k]$.

Second, by rearranging the terms in (\ref{suffCor1}), we obtain
\[m = \left(\frac{\sigma_a^2}{\sigma_z^2}\right)^{ \alpha \cdot \min(k,l) \cdot \frac{n}{2k}}\]
which corresponds to the maximum number of columns of $A$ that still yields a diminishing error probability in support recovery. Specifically, the term $\min(k,l)$ reveals the following insight. In the scenario with sufficiently small $\sigma_z^2$, for the challenging problem where the number of measurement vectors is less than the number of nonzero rows, i.e., $l < k$, adding one more measurement vector can lead to a much larger upper bound on the manageable number of columns of $A$. On the other hand, when $k\leq l$, the problem is much simpler and adding more measurement vectors may not significantly increase the manageable size of $A$. From an algorithmic point of view, subspace based methods can be used to recover the support in the latter case.

\medskip
\subsubsection{The Role of the Nonzero Signal Matrix} Next, we take a closer look at on the role of the nonzero signal matrix $W$ in support recovery with MMV. We consider two different cases. In the first case, $W$ consists of identical columns. The following corollary states the corresponding sufficient condition for support recovery.
\begin{corollary}
Suppose $W\in\mathbb{R}^{k\times l}$ has identical columns, i.e., $W = [\wv,...,\wv]$, for some $\wv\in\mathbb{R}^k$ with all entries being nonzero. If
\begin{align}
\lim_{ m\rightarrow \infty}\frac{\log m}{n_m} &< \min_{\Tc\subseteq[k]}\frac{1}{2|\Tc|} \log \left(1+l\cdot\frac{\sigma_a^2}{\sigma_z^2}\|\wv_{\Tc}\|_2^2\right)\label{suffCor2}
\end{align}
then there exists a sequence of support recovery maps $\{d^{(m)}\}_{m=k}^\infty,d^{(m)}:\mathbb{R}^{n_m\times l}\mapsto 2^{[m]}$, such that
\begin{align}
\lim_{m\rightarrow \infty}\P\{d(Y)\neq \textmd{supp}(X(W, \Sv))\} =0.\nonumber
\end{align}
\end{corollary}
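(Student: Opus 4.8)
The plan is to apply Theorem~\ref{theorem1}, so the only real work is to show that for $W=[\wv,\dots,\wv]$ the quantity $c(W)$ equals the right-hand side of (\ref{suffCor2}).

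First I would compute the Gram matrix $\underline{W}_{\Tc}^\intercal\underline{W}_{\Tc}$ for an arbitrary $\Tc\subseteq[k]$. Since every one of the $l$ columns of $\underline{W}_{\Tc}$ equals the subvector $\wv_{\Tc}$, each entry of the $l\times l$ matrix $\underline{W}_{\Tc}^\intercal\underline{W}_{\Tc}$ is $\|\wv_{\Tc}\|_2^2$, i.e. $\underline{W}_{\Tc}^\intercal\underline{W}_{\Tc}=\|\wv_{\Tc}\|_2^2\,\mathbf{1}\mathbf{1}^\intercal$ with $\mathbf{1}$ the all-ones column vector of length $l$. This is a rank-one matrix, which is the crux of the simplification. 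Then I would evaluate the determinant by the matrix determinant lemma (Sylvester's identity): $\det\!\big(I+\tfrac{\sigma_a^2}{\sigma_z^2}\|\wv_{\Tc}\|_2^2\,\mathbf{1}\mathbf{1}^\intercal\big)=1+\tfrac{\sigma_a^2}{\sigma_z^2}\|\wv_{\Tc}\|_2^2\,\mathbf{1}^\intercal\mathbf{1}=1+l\cdot\tfrac{\sigma_a^2}{\sigma_z^2}\|\wv_{\Tc}\|_2^2$. Consequently $\tfrac{1}{2|\Tc|}\log\det\!\big(I+\tfrac{\sigma_a^2}{\sigma_z^2}\underline{W}_{\Tc}^\intercal\underline{W}_{\Tc}\big)=\tfrac{1}{2|\Tc|}\log\!\big(1+l\cdot\tfrac{\sigma_a^2}{\sigma_z^2}\|\wv_{\Tc}\|_2^2\big)$ for every $\Tc$, and minimizing over $\Tc\subseteq[k]$ gives $c(W)=\min_{\Tc\subseteq[k]}\tfrac{1}{2|\Tc|}\log\!\big(1+l\cdot\tfrac{\sigma_a^2}{\sigma_z^2}\|\wv_{\Tc}\|_2^2\big)$.

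Finally, since the hypothesis (\ref{suffCor2}) asserts that $\lim_{m\to\infty}(\log m)/n_m$ exists and is strictly less than this value, in particular $\limsup_{m\to\infty}(\log m)/n_m<c(W)$, so Theorem~\ref{theorem1} applies directly and produces the desired sequence $\{d^{(m)}\}$ with vanishing error probability. I do not anticipate a genuine obstacle; the one conceptual point worth highlighting is that the "$l$" appearing in (\ref{suffCor2}) is exactly $\mathbf{1}^\intercal\mathbf{1}$, the squared length of the all-ones vector, which is the coherent array gain from having $l$ identical copies of the signal column — the MMV analogue of beamforming gain in the SIMO MAC picture of Section~\ref{sec:ConnectionMACMultipleReceiver}.
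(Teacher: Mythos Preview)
Your proposal is correct and follows essentially the same approach as the paper: identify $\underline{W}_{\Tc}^\intercal\underline{W}_{\Tc}=\|\wv_{\Tc}\|_2^2\,\mathbf{1}\mathbf{1}^\intercal$, evaluate the determinant of the rank-one perturbation of the identity to get $1+l\cdot\tfrac{\sigma_a^2}{\sigma_z^2}\|\wv_{\Tc}\|_2^2$, and then invoke Theorem~\ref{theorem1}. The paper's proof is line-for-line the same computation, only without explicitly naming the matrix determinant lemma.
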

\begin{proof}
Note that, for any $\Tc\subseteq[k]$,
\begin{align}
\log\det\left(I + \frac{\sigma_a^2}{\sigma_z^2}\underline{W}_{\Tc}^\intercal \underline{W}_{\Tc}\right)&= \log\det\left(I + \frac{\sigma_a^2}{\sigma_z^2}[\wv_{\Tc}, ...,\wv_{\Tc}]^\intercal [\wv_{\Tc}, ..., \wv_{\Tc}]\right)\nonumber\\
&= \log \det \left(I+\frac{\sigma_a^2}{\sigma_z^2}\|\wv_{\Tc}\|_2^2 \mathbf{1}\cdot\mathbf{1}^\intercal   \right)\nonumber\\
&= \log\left(1 + l\cdot\frac{\sigma_a^2}{\sigma_z^2}\|\wv_{\Tc}\|_2^2\right)\nonumber.
\end{align}
Applying Theorem 1 completes the proof.
\end{proof}

Based on (\ref{suffCor2}), the effect of having $l$ identical nonzero signal vectors is equivalent to decreasing the noise level by a factor of $l$, compared to the problem with SMV. This is in accordance with the intuition that when the underlying signals remain the same, taking more measurement vectors provides an opportunity to average down the measurement noise level. We hasten to add that identical columns are unlikely in practice. Even small changes in the coefficients can lead to a full rank matrix, leading to significant benefits in the high signal-to-noise ratio (SNR) case.

In the second case, we construct a special example to achieve a large performance improvement via a second measurement. This is demonstrated in the following corollary.
\begin{corollary}
\label{corollary3}
Suppose $W = [\wv_1, \wv_2]\in\mathbb{R}^{k\times 2}$, where $k$ is even, $\wv_1 = \mathbf{1}\in\mathbb{R}^k$, and $\wv_2$ is defined as
\begin{align}
w_{i, 2} = \left\{ \begin{array}{ll}
1 & \mbox{if $1<i\leq \frac{k}{2}$},\\
-1 & \mbox{if $\frac{k}{2}<i\leq k $}.\end{array} \right.
\label{definew2}
\end{align}
If
\begin{align}
\lim_{m\rightarrow \infty} \frac{\log m}{n} < \frac{1}{k} \log\left(1+k\cdot\frac{\sigma_a^2}{\sigma_z^2}\right)
\end{align}
then there exists a sequence of support recovery maps $\{d^{(m)}\}_{m=k}^\infty,d^{(m)}:\mathbb{R}^{n_m\times l}\mapsto 2^{[m]}$, such that
\begin{align}
\lim_{m\rightarrow \infty}\P\{d(Y)\neq \textmd{supp}(X(W, \Sv))\} =0.\nonumber
\end{align}
\end{corollary}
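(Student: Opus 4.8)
The plan is to verify the hypothesis of Theorem~\ref{theorem1} by computing the quantity $c(W)$ for the specific matrix $W=[\wv_1,\wv_2]$ and showing that the stated threshold $\frac{1}{k}\log(1+k\sigma_a^2/\sigma_z^2)$ is exactly a lower bound for it (in fact we expect it to equal $c(W)$, with the minimizing set being $\Tc=[k]$). Concretely, for each $\Tc\subseteq[k]$ I would form the $2\times 2$ Gram matrix $\underline{W}_\Tc^\intercal\underline{W}_\Tc$, whose entries are $\|\wv_{1,\Tc}\|_2^2=|\Tc|$, $\|\wv_{2,\Tc}\|_2^2=|\Tc|$, and the cross term $\langle\wv_{1,\Tc},\wv_{2,\Tc}\rangle=|\Tc\cap[1,k/2]|-|\Tc\cap(k/2,k]|$. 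Writing $a=|\Tc|$ and $b$ for this signed difference (so $|b|\le a$ and $b\equiv a \pmod 2$ in parity terms, though that is not essential), we get
\begin{align}
\det\!\left(I+\frac{\sigma_a^2}{\sigma_z^2}\underline{W}_\Tc^\intercal\underline{W}_\Tc\right)
=\left(1+\frac{\sigma_a^2}{\sigma_z^2}(a+b)\right)\!\left(1+\frac{\sigma_a^2}{\sigma_z^2}(a-b)\right),\nonumber
\end{align}
since the two eigenvalues of $\underline{W}_\Tc^\intercal\underline{W}_\Tc$ are $a+b$ and $a-b$ (the matrix $aI+b\,\sigma_x$ has eigenvalues $a\pm b$).

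Next I would plug this into the definition of $c(W)$ and minimize the per-element normalized log-determinant
\begin{align}
f(\Tc)=\frac{1}{2a}\log\!\left[\left(1+\frac{\sigma_a^2}{\sigma_z^2}(a+b)\right)\!\left(1+\frac{\sigma_a^2}{\sigma_z^2}(a-b)\right)\right]\nonumber
\end{align}
over all nonempty $\Tc$. The key claim is that this is minimized by taking $\Tc=[k]$, for which $a=k$ and $b=0$ (half the entries of $\wv_2$ are $+1$, half are $-1$), giving $f([k])=\frac{1}{2k}\log(1+k\sigma_a^2/\sigma_z^2)^2=\frac1k\log(1+k\sigma_a^2/\sigma_z^2)$, exactly the stated threshold. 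Then, since $\limsup_{m\to\infty}\frac{\log m}{n_m}=\lim_{m\to\infty}\frac{\log m}{n_m}<\frac1k\log(1+k\sigma_a^2/\sigma_z^2)\le c(W)$, Theorem~\ref{theorem1} applies and yields the desired sequence of support recovery maps.

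The main obstacle is establishing that $\Tc=[k]$ is indeed the minimizer of $f(\Tc)$ — this is the only non-mechanical step. I would argue it in two stages. First, for fixed cardinality $a$, the function $(1+\gamma(a+b))(1+\gamma(a-b))=(1+\gamma a)^2-\gamma^2 b^2$ (with $\gamma=\sigma_a^2/\sigma_z^2$) is decreasing in $|b|$, so for each $a$ the worst (smallest $f$) choice has $|b|$ as large as possible; but we also need to compare across different $a$. The cleanest route is: for any $\Tc$ with $|\Tc|=a$ we have $f(\Tc)\ge\frac{1}{2a}\log\big((1+\gamma a)^2-\gamma^2 a^2\big)=\frac{1}{2a}\log(1+2\gamma a)$ only when $|b|$ can reach $a$, which happens e.g. when $\Tc$ lies entirely in one half — so instead I would lower-bound using concavity of $\log$: by the AM–GM / concavity inequality $\log(1+\gamma(a+b))+\log(1+\gamma(a-b))\ge 2\log(1+\gamma a) \cdot$? — no, that inequality goes the wrong way, so the honest statement is $f(\Tc)\ge\frac{1}{a}\log(1+\gamma a)-\text{(correction)}$, and one checks that among all admissible $(a,b)$ the minimum of $\frac{1}{2a}\log((1+\gamma a)^2-\gamma^2b^2)$ is attained at $(k,0)$ by a direct monotonicity argument in $a$ (the map $a\mapsto\frac{1}{a}\log(1+\gamma a)$ is decreasing) combined with the observation that the $b=0$, $a=k$ value dominates any $b\neq 0$ contribution at smaller $a$. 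I expect this comparison to reduce, after elementary estimates, to checking that $\frac{1}{2a}\log((1+\gamma a)^2-\gamma^2 b^2)\ge\frac{1}{k}\log(1+\gamma k)$ for all valid $(a,b)$, which is a one-variable inequality amenable to standard calculus; carrying it out carefully is the crux of the proof.
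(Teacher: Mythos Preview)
Your overall strategy---compute $c(W)$ by parametrizing $\underline{W}_\Tc^\intercal\underline{W}_\Tc$ via $a=|\Tc|$ and the signed difference $b$, show the minimum over $\Tc$ equals $\frac{1}{k}\log(1+\gamma k)$, and invoke Theorem~\ref{theorem1}---is exactly the paper's, and your computation of the $2\times2$ Gram matrix and its determinant is correct.

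The gap is in the minimization. You record only the constraint $|b|\le a$, but this is the binding constraint only when $a\le k/2$; when $a>k/2$ the sharper bound is $|b|\le k-a$, because $t_1=|\Tc\cap[k/2]|$ and $t_2=|\Tc\setminus[k/2]|$ are each at most $k/2$. Without this tightening your lower bound $f(\Tc)\ge\frac{1}{2a}\log(1+2\gamma a)$ is useless for $a>k/2$: the map $a\mapsto\frac{1}{2a}\log(1+2\gamma a)$ is strictly decreasing, so at $a=k$ it sits strictly below the target $\frac{1}{k}\log(1+\gamma k)$, and no amount of calculus on that bound alone will recover the inequality. The paper handles this by a case split. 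For $a\le k/2$ your bound plus monotonicity already gives the minimum $\frac{1}{k}\log(1+\gamma k)$ at $a=k/2$. For $a>k/2$ the worst admissible $|b|$ is $k-a$, which yields $\frac{1}{2a}\log(\beta_1+\beta_2 a)$ with $\beta_1=1-\gamma^2k^2$ and $\beta_2=2\gamma(1+\gamma k)$; one then checks that this function of $a$ on $[k/2,k]$ has equal endpoint values $\frac{1}{k}\log(1+\gamma k)$ and a single interior critical point which is a maximum, so the minimum over this range is again the threshold. Once you supply this case split and the derivative check on the second piece, the rest of your sketch goes through.
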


\begin{proof}
Please see Appendix \ref{A_ProofofExample}.
\end{proof}

For the ease of illustration, we compare the performances among the problems with (i) SMV where $W = \mathbf{1}\in \mathbb{R}^{k\times 1}$, (ii) MMV where $W = [\mathbf{1}, \mathbf{1}]\in\mathbb{R}^{k\times 2}$, and (iii) MMV where $W$ is defined in Corollary 3, for an even $k$.\footnote{Note that $\|\wv_1\|_2 = \|\wv_2\|_2$, which can be viewed as a way of normalization to make comparison meaningful.} The following table summarizes the results.
\begin{table}[h]
\renewcommand{\arraystretch}{1.5}
\centering
\begin{tabular}{l||c|c}
\hline
 & lower bound on $n$ & upper bound on $m$\\
\hline
\hline
(i) SMV ($W = \mathbf{1}$)& $n > \frac{\log m}{\frac{1}{2k}\log\left(1 + \frac{k\sigma_a^2}{\sigma_z^2}\right)}$ & $m < \left(1 + \frac{k\sigma_a^2}{\sigma_z^2}\right)^\frac{n}{2k}$\\
\hline
(ii) MMV ($W = [\mathbf{1}, \mathbf{1}]$, Corollary 2) & ${n} > \frac{\log m}{\frac{1}{2k}\log\left(1 + 2\cdot\frac{k\sigma_a^2}{\sigma_z^2}\right)}$ & $m < \left(1 + 2\cdot\frac{k\sigma_a^2}{\sigma_z^2}\right)^\frac{n}{2k}$\\
\hline
(iii) MMV ($W$ as defined in Corollary 3) & ${n} > \frac{\log m}{\frac{1}{k}\log\left(1 + \frac{k\sigma_a^2}{\sigma_z^2}\right)}$ & $m < \left(1 + \frac{k\sigma_a^2}{\sigma_z^2}\right)^{\frac{n}{k}}$\\
\hline
\end{tabular}
\end{table}

Based on this table, we have the following observations for this specific setup. First, compared with the SMV problem, having MMV can improve the performance of support recovery by enabling a relaxed condition on the number of measurements $n$. Equivalently, for the same number of measurements per measurement vector, the MMV setup permits a measurement matrix $A$ with more columns. Second, the performance improvement enabled by having MMV is closely related to $c(W)$, and it can be quite different for different nonzero signal value matrices. In case (ii), we achieve a moderate performance gain which is equivalent to reducing the noise level by half. On the contrary, in case (iii), a larger performance gain can be achieved due to the structure of the nonzero signal value matrix. Note that the change occurs in the factor in the exponent in the upper bound for $m$. In summary, these examples are specially constructed as representative cases to illustrate the effect of the nonzero signal value matrix $W$ in support recovery. Generally, the difficulty of a support recovery problem is inherently determined by the model parameters and Theorems 1 and 2 together characterize their exact roles.

\medskip
\subsubsection{A Generalization of $W$}\label{sec:generalW}
Thus far, we have assumed $w_{i,j}\neq 0$ for all $i, j$ in the discussion above. Now, we generalize $W$ in the following manner: for each $i\in[k]$, there exist a $j\in[l]$ such that $w_{i,j}\neq 0$; meanwhile, for each $j\in[l]$, there exist a $i\in[k]$ such that $w_{i,j}\neq 0$. This relaxed assumption indicates that neither a zero row nor a zero column exists but zero elements are allowed in $W$, as opposed to the original assumption that all elements of $W$ are nonzeros. Accordingly,
\begin{align}
\supp(X) = \bigcup_{j=1}^l\supp(\Xv_j)\nonumber
\end{align}
which means the support of $X$ is equivalent to the union of the supports of all columns of $X$. Following the proofs for Theorem 1 and 2, one can readily see that the two theorems still hold in this case.

It is worthwhile to note that having more measurement vectors does not necessarily result in performance improvement. To illustrate this point, we construct a simple example. Let $W^{(1)} =  \left[ 0.1, 5\right]^\intercal$, $W^{(2)} =  \bigg[ \begin{array}{cc}
0.1& 0 \\
5& 6 \end{array} \bigg]$, and $\frac{\sigma_a^2}{\sigma_z^2} = 10$. As a result, $c(W^{(1)}) = c(W^{(2)}) = \frac{1}{2}\log 1.1$. This means that the performance limits for these two setups are the same. Intuitively, by inspecting the definition of $c(W)$, it can be seen that if a submatrix composed of certain rows of $W$ is ill-conditioned, the minimization inside $c(W)$ may likely be determined by that submatrix. Hence, for an extra measurement vector to benefit support recovery, this measurement vector should correspond to a column of $W$ whose presence improves the small eigenvalues of the previous worst-case submatrix that causes the performance bottleneck. The observations are reminiscent of some of the intuition developed in space time wireless communication systems \cite{Paulraj}.
The $l$ receivers can be viewed an a $l$ antenna receiver and it is known that the rank of the channel matrix plays an important role
in the high SNR case. The correlation between the channel gains is not as harmful in this context. The gains of having multiple receive antennas is lower at low SNR \cite{Paulraj}.

\section{Relation to Existing Results}
\label{sec:relation}
We discuss the relation between the main results in this paper and existing results in the literature.

\subsection{Relation to the Performance of Practical Algorithms}
Our analysis provides the performance limit that governs all possible support recovery algorithms. This is achieved by a theoretic support recovery method which has exponential complexity and therefore is impractical. However, it is interesting to make comparisons with performance limits of practical algorithms, since it provides insight into the potential gap between the performance of a practical algorithm and the fundamental performance limit, and suggests possibilities for performance improvement.

We note that the model employed in Obozinski, Wainwright, and Jordan \cite{OWainJ} is similar to the measurement model (\ref{CS_model}). Sufficient and necessary conditions are derived therein for multivariate group Lasso to successfully recover the support of the sparse signal in the presence of noise, as $m$, $n$, and $k$ jointly grow to infinity in certain manner.\footnote{Note that it is stated at the end of Section 3.3 of \cite{OWainJ} that the requirement on $k$ growing to infinity can be removed. The remark therein provided an alternative probability upper bound for the intermediate term $T_1$ such that this bound can drop to zero even for a fixed $k$. However, it seems that the other intermediate term $T_2$ still relies on a probability upper bound that involves a term scaling as $\exp(-\frac{k}{2})$, which requires an increasing $k$ to drive it to zero.} This is different from our assumption that $k$ is fixed. Although a direct comparison may seem difficult, we wish to draw the following intuitive discussion. Note that Example 1 in \cite[Section 2.3]{OWainJ} considered the case for identical regression, which means the nonzero signal matrix $W$ has identical columns. The conclusion therein is that multivariate group Lasso offers no performance improvement under the MMV formulation compared with using Lasso on an SMV formulation with one measurement vector. However, our Corollary 2 indicates that the effect of having $l$ identical columns in $W$ is equivalent to lowering the noise level by a factor of $l$. 
The different performances indicated by multivariate group Lasso and the information theoretic analysis lead to the following observation. In general, if the sparse signal to be recovered possesses strong structural property, an algorithm needs to take advantage of this factor in order to achieve better performance. For multivariate group Lasso, the $\ell_1/\ell_p$ cost term completely ignores the row-wise structure presented in the nonzero entries. In contrast, AR-SBL \cite{ZhilinRao} is developed based on the assumption that the elements of $W$ are drawn from an auto-regressive process, and it explicitly attempts to learn this correlation structure. Based on the experimental study presented in \cite{ZhilinRao}, notable performance improvement in support recovery was observed when such correlation is present, including the case when the columns of $W$ were highly correlated.

\medskip
\subsection{Relation to Information Theoretic Performance Analysis}
\label{sec:relationInfoTheo}

Under the assumption that $\sigma_a^2 = 1$ and the elements of $W$ are i.i.d. according to $\mathcal{N}(0,1)$,\footnote{We only consider the real case in this discussion.} Tang and Nehorai \cite{TangNehorai} identifies sufficient and necessary conditions, involving the model parameters (i.e., $m, n, k, l,$ and $ \sigma_z^2$), to ensure diminishing error probability in support recovery as the problem size grows to infinity. We restate the sufficient condition to facilitate the discussion.

\begin{theorem}[{\cite[Theorem 5]{TangNehorai}}]
\label{theorem3}
Suppose that $n = \Omega(k\log\frac{m}{k})$ and $\frac{l}{2}\log\frac{n}{\sigma_z^2}\gg \log(k(m-k))$, then with probability one the error probability vanishes. In particular, if $n = \Omega(k\log\frac{m}{k})$ and $l\gg\frac{\log m}{\log\log m}$, the error probability vanishes as $m\rightarrow \infty$.
\end{theorem}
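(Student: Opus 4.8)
To re-derive Theorem~\ref{theorem3}, the plan is to follow the optimal-detection (hypothesis-testing) route outlined in Section~\ref{sec:ConnectionMACMultipleReceiver}, specialized to the random-signal ensemble of \cite{TangNehorai} in which $\sigma_a^2=1$ and the entries of $W$ are i.i.d.\ $\mathcal{N}(0,1)$. First I would marginalize over the signal values: since the columns of $W$ are i.i.d.\ $\mathcal{N}(0,I_k)$ and independent of $A$ and $Z$, conditioned on $A$ and on the true support $S=\supp(X)$ the measurement columns $\Yv_1,\dots,\Yv_l$ are conditionally i.i.d.\ with $\Yv_j\mid A,S\sim\mathcal{N}\!\left(0,\ A_S A_S^\intercal+\sigma_z^2 I_n\right)$, where $A_S$ denotes the $n\times k$ submatrix of $A$ with columns indexed by $S$. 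Hence, conditioned on $A$, the Bayes-optimal rule (the support prior being uniform) is the maximum-likelihood test over the $\binom{m}{k}$ candidate supports based on the marginal log-likelihood $\ell(Y;S')\triangleq\log p(Y\mid S',A)$, and its conditional error probability obeys the union bound
\begin{align}
\P\{\hat{S}\neq S\mid A\}\ \le\ \sum_{S'\neq S}\P\{\, \ell(Y;S')\ge \ell(Y;S)\mid A\,\}.\nonumber
\end{align}
I would then group the competing supports by their overlap deficit $t\triangleq|S\setminus S'|=|S'\setminus S|\in\{1,\dots,k\}$; there are exactly $\binom{k}{t}\binom{m-k}{t}\le\left(k(m-k)\right)^{t}$ supports at level $t$.

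Second, for a fixed pair $(S,S')$ at level $t$ I would control the pairwise error probability by a Chernoff/Bhattacharyya bound. Because $\ell(Y;S')-\ell(Y;S)$ is a quadratic form in the conditionally i.i.d.\ Gaussian vectors $\Yv_1,\dots,\Yv_l$, the bound factorizes over the $l$ measurement vectors and yields
\begin{align}
\P\{\,\ell(Y;S')\ge\ell(Y;S)\mid A\,\}\ \le\ \exp\!\left(-\tfrac{l}{2}\,D(A;S,S')\right).\nonumber
\end{align}
Here $D(A;S,S')$ is a Bhattacharyya-type separation between the two Gaussian covariances, expressible through the eigenvalues of $(A_SA_S^\intercal+\sigma_z^2 I)^{-1}(A_{S'}A_{S'}^\intercal+\sigma_z^2 I)$. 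The crux is to show that for a Gaussian $A$ with $n$ large these eigenvalue quantities concentrate: the $k-t$ shared columns contribute eigenvalues near $1$ (hence nothing), while the $t$ columns of $S'\setminus S$, once projected onto the orthogonal complement of $\mathrm{colspan}(A_S)$, behave like independent Gaussian directions carrying energy $\approx n$ against the noise floor $\sigma_z^2$, so that $D(A;S,S')\gtrsim t\,\log\frac{n}{\sigma_z^2}$ up to lower-order terms, uniformly over all such pairs, provided every column submatrix $A_T$ with $|T|\le 2k$ is well-conditioned. Securing this last property uniformly is exactly where $n=\Omega\!\left(k\log\frac{m}{k}\right)$ enters: a union bound over the $\binom{m}{2k}\le(em/k)^{2k}$ relevant column subsets, each failing to be well-conditioned with probability $\le e^{-cn}$, forces $n\gtrsim k\log\frac{m}{k}$ (in particular $n\gg k$).

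Third, I would assemble the pieces. Summing the pairwise bounds over level $t$ gives a contribution $\lesssim\left(k(m-k)\,(n/\sigma_z^2)^{-l/2}\right)^{t}$, so the whole union bound is dominated by a geometric series with ratio $r_m\triangleq k(m-k)\,(n/\sigma_z^2)^{-l/2}$; this ratio tends to $0$ precisely when $\frac{l}{2}\log\frac{n}{\sigma_z^2}\gg\log\!\left(k(m-k)\right)$, giving $\P\{\hat{S}\neq S\mid A\}\to0$ on the event that all column submatrices of $A$ of size at most $2k$ are well-conditioned. Since that event has probability at least $1-(em/k)^{2k}e^{-cn}$, which is summable along the sequence in $m$ under $n=\Omega\!\left(k\log\frac{m}{k}\right)$ with a large enough constant, Borel--Cantelli shows that almost every realization of $A$ eventually lies in the good event; hence the error probability $P_e(A)\triangleq\P\{\hat{S}\neq S\mid A\}$ (a function of $A$, the expectation being over $\Sv$, $W$ and $Z$) vanishes with probability one. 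The second assertion follows by substitution: $n=\Omega\!\left(k\log\frac{m}{k}\right)$ forces $\log\frac{n}{\sigma_z^2}$ to grow at least like $\log\log m$, so $l\gg\frac{\log m}{\log\log m}$ already gives $\frac{l}{2}\log\frac{n}{\sigma_z^2}\gg\log m$, which dominates $\log(k(m-k))\le 2\log m$.

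I expect the main obstacle to be the uniform eigenvalue control in the second step: one must make the Bhattacharyya exponent $D(A;S,S')$ explicit enough to extract the clean $\tfrac{l}{2}\log\frac{n}{\sigma_z^2}$ scaling per excess index, yet keep the approximation uniform over exponentially many pairs $(S,S')$ while permitting $k$ to grow with $m$. The delicate points are the mild coupling between $A_S$ and $A_{S'}$ (they share $k-t$ columns) and the degradation of the exponent when $t$ is a nontrivial fraction of $k$, so that $\mathrm{colspan}(A_S)$ occupies a nonnegligible share of $\mathbb{R}^n$; conditioning on $A_{S\cap S'}$ and working in the orthogonal complement of its column span is the natural device for decoupling the two hypotheses and recovering the stated thresholds.
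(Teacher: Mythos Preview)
The paper does not prove Theorem~\ref{theorem3} at all: it is quoted verbatim from \cite[Theorem~5]{TangNehorai} and is restated in Section~\ref{sec:relationInfoTheo} solely ``to facilitate the discussion'' of how the present paper's fixed-$W$ results relate to the random-$W$ setting of Tang and Nehorai. There is therefore no proof in this paper for you to compare against.

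Your sketch is a reasonable reconstruction of a hypothesis-testing argument in the spirit of \cite{TangNehorai} (marginalize over the Gaussian $W$, apply a union bound over competing supports grouped by overlap deficit, control pairwise errors via a Bhattacharyya-type exponent, and invoke RIP-style conditioning of Gaussian submatrices to justify $n=\Omega(k\log(m/k))$). Whether it matches the actual proof in \cite{TangNehorai} is something you would have to check against that reference, not this paper. If your intent was to supply a self-contained derivation here, be aware that the present paper's own techniques (Appendices~\ref{app:th1} and~\ref{app:th2}) are tailored to fixed $W$ and fixed $k$, and do not directly yield Theorem~\ref{theorem3}; the authors explicitly note in the paragraph following the theorem that ``a direct comparison may seem difficult'' and offer only heuristic remarks.
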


\medskip

As noted in \cite{TangNehorai}, heuristically, when $l = 1$, $n\gg m$ is needed to guarantee asymptotically successful support recovery. Although our main results aim for the case with fixed $W$, intuitive observations can still be drawn to provide more insight into the behavior of the support recovery with random $W$. To see this, recall that for a sequence of support recovery problems with a fixed $W$, the quantity $c(W)$ inherently determines the performance limit and the sufficient condition is $n > \frac{1}{c(W)}\log m$. Now, let us assume that the elements of $W$ are i.i.d. according to certain distribution with bounded support. Thus, in general, for any constant $\delta>0$, the probability $\P(c(W)<\delta)$ may be strictly positive. This implies that for the scaling $n = \Theta(\log m)$, the error probability will not converge to zero because there is a nontrivial probability of poor realizations of $W$ such that the sufficient condition above cannot be satisfied. As one plausible solution, we need $n$ to grow with $m$ at a much faster rate to ensure that the sufficient condition above can be met with probability converging to one.

\section{Conclusion and Discussion}
\label{sec:discussion}
We have developed performance limits for support recovery of sparse signals when multiple measurement vectors are available. Sufficient and necessary conditions are obtained for support recovery to be asymptotically successful. Especially, the role of nonzero entries in the performance limits is explicitly characterized, and the quantity $c(W)$ captures the effect of all nonzero entries. The key technique that enabled our analysis is motivated by the connection between sparse signal recovery with MMV and multiple access communication over SIMO channels. This leads to the opportunity of leveraging the methodology for deriving SIMO MAC capacity to help understand the performance limits of sparse signal recovery with MMV. Interpretations of the main results were provided in order to demonstrate the performance improvement by having MMV, and relations to existing results were also discussed.

The proposed methodology also has the potential to address other theoretical and practical issues associated with sparse signal recovery. First, this analytical approach can be extended to deal with the case where the signal value matrix $W$ is random. Outage analysis for fading channels can be leveraged to reveal the performance limits for sparse signal recovery in this case. Second, one can consider the problem where recovering a partial support is also desirable, if recovering the full support is not possible \cite{Reeves_08ISIT}. This can be achieved by treating a subset of users as noise and examining the capacity region of the remaining users. The connection between sparse signal recovery and multiple access communication offers the opportunity to explore the connection between sparse recovery algorithms and multiuser detection techniques with potential for cross-fertilization. A sender with larger channel gain may be easier to detect compared to a sender with weaker channel gain. The successive interference cancellation (SIC) scheme is aimed to detected users in a sequential manner, where the remaining undetected users are treated as noise bearing a strong resemblance to the matching pursuit algorithms for sparse signal recovery. It is conceivable that by appropriately utilizing the techniques for channel coding, performance limits could be obtained for partial support recovery of sparse signals.

Further, according to the interpretations of the main results, we can see that the structure of $W$ plays an important role in the performance limits. Roughly speaking, high correlation among the columns of $W$ may decrease the performance limit for support recovery, in the sense that, given other parameters fixed, the dimension of the signal $m$ should be reduced to guarantee successful support recovery. However, as observed in practice, when only a finite number of measurements per measurement vector are available, a strong correlation among columns of $W$ actually facilitates the estimation of the nonzero signal values, and hence can be beneficial to the performance. Hence, there is an interplay that is not revealed by the asymptotic analysis. It will be interesting to study an analytical approach which links the estimation quality of nonzero values in the finite case and performance limits of support recovery in the asymptotic case.

\section*{Acknowledgment}
This research was supported by NSF Grants CCF-0830612 and CCF-1144258. The authors wish to thank Professor Young-Han Kim for insightful discussions on the problem of sparse signal recovery and its connection to multiple access communication. The proof techniques extend upon our previous methodology for the SMV case, to which Professor Kim made significant contributions.

\newpage
\appendices
\section{Proof of Theorem 1}
\label{app:th1}
For the ease of exposition, we consider two distinct cases on the number of nonzero rows of $X$.

\textbf{Case 1}: $k=1$. In this case, the signal of interest is $X = X(W, S_1)$, where $W = [w_{1,1},...,w_{1, l}]$. Fix $\epsilon>0$. We first form an estimate $\hat{\rho}_i$ of $|w_{1,i}|$ for $i\in[l]$ as
\begin{align}
\hat{\rho}_i \triangleq \sqrt{\frac{|\frac{1}{n_m}\|\Yv_i\|_2^2-\sigma_z^2|}{\sigma_a^2}}.
\end{align}

Declare that $\hat{s}_1\in[m]$ is the estimated index of the nonzero row, i.e., $d^{(m)}(Y)=\{\hat{s}_1\}$, if it is the unique index such that
\begin{align}
\frac{1}{nl}\left\|Y - \Av_{\hat{s}_1}\left[(-1)^{q_1}\hat{\rho}_1, ..., (-1)^{q_l}\hat{\rho}_l\right]\right\|_F^2\leq \sigma_z^2 + \epsilon^2\sigma_a^2
\end{align}
for $q_i = 1$ or $q_i = 2$, $i\in [l]$. If there is none or more than one such index, pick an arbitrary index.

We analyze the average probability of error
\begin{align}
\P(\Ec) = \P\{d^{(m)}(Y)\neq \textmd{supp}(X(W, S_1))\}.
\end{align}
Due to the symmetry in the problem and the measurement matrix generation, we assume without loss of generality $S_1 = 1$, that is,
\begin{align}
Y = \Av_1W + Z
\end{align}
for some $W = [w_{1, 1}, ...,w_{1, l}] \in\mathbb{R}^{1\times l}$.
In the following analysis, we drop superscripts and
subscripts on $m$ for notational simplicity when no ambiguity arises. Define the events
\begin{align}
\Ec_{s} \triangleq \left\{\forall i\in[l],\exists q_i\in\{1,2\}, \text{ such that }
\frac{1}{nl}\left\|Y - \Av_s[(-1)^{q_i}\hat{\rho}_1, ..., (-1)^{q_l}\hat{\rho}_l]  \right\|_F^2 \leq \sigma_z^2
+ \epsilon^2  \sigma^2_a\right\}, ~s\in[m].\nonumber
\end{align}
Then,
\begin{align}
\P(\Ec) \le \P\left(\Ec_{1}^c \cup\left(\cup_{s=2}^{m} \Ec_{s} \right)\right)
\label{error_events}
\end{align}
where $\Ec^c$ denotes the compliment event of $\Ec$.
Let
\begin{align}
\Ec_{\text{aux}}&\triangleq \left\{    \textmd{det}\left(\frac{1}{n}\left(\Av_1 W + Z\right)^\intercal \left(\Av_1 W + Z\right)\right)  - \textmd{det}\left(\sigma_a^2 W^\intercal W + \sigma_z^2 I\right)  \in (-\epsilon, \epsilon)  \right\}\nonumber\\
& ~~~~~\cap
\left(\bigcap_{i=1}^l \left\{\hat{\rho}_i-|w_{1,i}|\in\left(-\epsilon,\epsilon\right)\right\}\right).\nonumber
\end{align}
Then, by the union of events bound and the fact that $\Ac^c\cup\Bc =
\Ac^c\cup (\Bc \cap \Ac)$,
\begin{align}
\P(\Ec)
&\leq \P(\Ec_\text{aux}^c) + \P(\Ec_1^c) + \sum_{s=2}^m \P(\Ec_s \cap \Ec_\text{aux}).
\label{C1_Error}
\end{align}

We bound each term in (\ref{C1_Error}). First, by the weak law of large numbers (LLN), $\lim_{m\rightarrow\infty}\P(\Ec_\text{aux}^c) = 0$. Next, we consider $\P(\Ec_1^c)$. It can be readily seen that, with $q_i = (3 + \textrm{sign}(w_{1,i}))/2$,
\begin{align}
\lim_{m\rightarrow\infty}\P\left(\frac{1}{nl}\left\|Y - \Av_1[(-1)^{q_1}\hat{\rho}_1, ..., (-1)^{q_l}\hat{\rho}_l]  \right\|_F^2 \leq \sigma_z^2
+ \epsilon^2  \sigma^2_a \right)  = 1.
\end{align}
Hence, $\lim_{m\rightarrow\infty}\P(\Ec_1^c) =0$.

Next, we consider the third term in (\ref{C1_Error}). We need the following lemma, whose proof is presented at the end of this appendix.
\begin{lemma}
\label{lemma1}
Let $B\in\mathbb{R}^{n\times l}$ be a fixed matrix satisfying $(\prod_{i=1}^l [\frac{1}{n}B^\intercal B]_{i,i})^\frac{1}{l} \equiv \alpha>0$. Let $\Sc\subseteq[l]$ be a fixed set. Let $D\in\mathbb{R}^{n\times l}$ be a matrix such that, for $j\in\Sc$, $\Dv_j\sim\mathcal{N}(\mathbf{0}, \theta_j I)$ with some $\theta_j>0$; for $j\in[l]\backslash\Sc$, $\Dv_j \equiv \mathbf{0}$. All columns of $D$ are independent. Then, for any $\gamma\in(0, \alpha)$,
\begin{align}
\P\left(\frac{1}{nl}\|B-D\|_F^2 \leq \gamma \right)\leq 2^{-\frac{n}{2}\log\frac{\alpha^l}{\gamma^l}}.
\end{align}

\end{lemma}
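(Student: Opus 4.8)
The plan is to bound the left-hand side by a Chernoff (exponential Markov) inequality applied to the Frobenius norm, exploiting that it decomposes over the independent columns of $D$. Write $\|B-D\|_F^2=\sum_{j=1}^l\|\Bv_j-\Dv_j\|_2^2$ and set $b_j\triangleq[\tfrac1n B^\intercal B]_{j,j}=\tfrac1n\|\Bv_j\|_2^2$, so that $\prod_{j=1}^l b_j=\alpha^l$ and each $b_j>0$. Since the columns of $D$ are independent, for every $t>0$,
\begin{align}
\P\!\left(\tfrac{1}{nl}\|B-D\|_F^2\le\gamma\right)\le e^{tnl\gamma}\prod_{j=1}^l\E\!\left[e^{-t\|\Bv_j-\Dv_j\|_2^2}\right].\nonumber
\end{align}
It then remains to control each column factor and optimize over $t$.

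For $j\in[l]\backslash\Sc$ the factor is simply $e^{-tnb_j}$. For $j\in\Sc$ we have $\Bv_j-\Dv_j\sim\mathcal{N}(\Bv_j,\theta_j I)$, and an elementary Gaussian integral gives the closed form $\E[e^{-t\|\Bv_j-\Dv_j\|_2^2}]=(1+2t\theta_j)^{-n/2}\exp\!\big(-\tfrac{tnb_j}{1+2t\theta_j}\big)$. The awkward feature is that the $\theta_j$ are arbitrary positive numbers over which we have no control, while the claimed bound is uniform in them. The resolution is to write this expression as $\exp\!\big(-n[\tfrac12\log u+\tfrac{tb_j}{u}]\big)$ with $u=1+2t\theta_j\ge1$ and minimize the bracket over all $u\ge1$: this gives $\E[e^{-t\|\Bv_j-\Dv_j\|_2^2}]\le e^{-n\phi(tb_j)}$, where $\phi(c)=c$ for $c\le\tfrac12$ and $\phi(c)=\tfrac12(1+\log 2c)$ for $c\ge\tfrac12$. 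Since $\phi(c)\le c$, the same bound also covers the deterministic columns $j\notin\Sc$ (the case $\theta_j=0$), so $\tfrac1n\log\P(\cdot)\le tl\gamma-\sum_{j=1}^l\phi(tb_j)$ for every $t>0$.

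Finally I would choose $t=\tfrac1{2\gamma}$, so that $tl\gamma=\tfrac l2$ and $tb_j=\tfrac{b_j}{2\gamma}$, and split the sum over $j$ according to whether $b_j\ge\gamma$ (in which case $\phi(tb_j)=\tfrac12+\tfrac12\log\tfrac{b_j}{\gamma}$) or $b_j<\gamma$ (in which case $\phi(tb_j)=\tfrac{b_j}{2\gamma}$). Using $\prod_j b_j=\alpha^l$, the desired inequality collapses to $\sum_{j:\,b_j<\gamma}\big(1-\tfrac{b_j}{\gamma}+\log\tfrac{b_j}{\gamma}\big)\le0$, which holds term by term because $\log r\le r-1$ for all $r>0$. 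This delivers exactly $\P(\tfrac1{nl}\|B-D\|_F^2\le\gamma)\le 2^{-\frac n2\log(\alpha^l/\gamma^l)}$.

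The main obstacle is precisely the step that makes the column moment-generating-function bound independent of the nuisance variances $\theta_j$: a naive Chernoff estimate either blows up as $\theta_j\to\infty$ or becomes vacuous as $\theta_j\to0$, and it is the minimization over $u\ge1$ (equivalently, the flattening $\phi$) together with the precise tuning $t=1/(2\gamma)$ that makes the two regimes of $\phi$ dovetail into the clean exponent $\tfrac n2\log(\alpha^l/\gamma^l)$. Everything else — the closed-form MGF, the case split, and the final elementary inequality — is routine.
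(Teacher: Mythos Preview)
Your proof is correct and follows the same overall Chernoff-bound strategy as the paper, but the intermediate optimization is organized differently. After writing the exponent as
\[
-lp\gamma-\sum_{j}\frac{(-p/n)\|\bv_j\|_2^2}{1-2\theta_j p/n}-\tfrac{n}{2}\log\prod_j\bigl(1-\tfrac{2\theta_j p}{n}\bigr),
\]
the paper applies the AM--GM inequality to the middle sum, then replaces the geometric mean $\bigl(\prod_j(1-2\theta_j p/n)\bigr)^{1/l}$ by a single scalar $1-2\theta p/n$, and solves the resulting two-variable saddle problem $\min_{p<0}\max_{\theta>0}g(p,\theta)$ by calculus, finally invoking $1-1/x\le\log x$. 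You instead remove the nuisance variances column by column, minimizing $\tfrac12\log u+\tfrac{tb_j}{u}$ over $u\ge1$ to obtain the piecewise function $\phi$; the specific choice $t=1/(2\gamma)$ then collapses everything to the same elementary inequality $\log r\le r-1$. Your route is a little more direct---it avoids both the AM--GM step and the min--max calculus---while the paper's scalarization yields a single clean one-parameter optimization at the price of that extra layer. Both arguments are tight in that they recover exactly the stated exponent $\tfrac{n}{2}\log(\alpha^l/\gamma^l)$.
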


We continue the proof of Theorem 1. Consider $\P(\Ec_s\cap\Ec_\text{aux})$ for $s \ne 1$. Note that
\begin{align*}
\P(\Ec_s\cap\Ec_\text{aux}) \leq \P(\Ec_s|\Ec_\text{aux}) =\int_{Y_1\in \Ec_\text{aux}} \P(\Ec_s|\{Y =Y_1\} \cap \Ec_\text{aux})f(Y_1|\Ec_\text{aux})dY_1.\nonumber
\end{align*}
Let $[(-1)^{q_1} \hat{\rho}_1, ...,(-1)^{q_l} \hat{\rho}_l] = U\Theta V^\intercal$ denote the singular value decomposition. Since $\Av_s$ is independent of $Y$ and $\hat{\rho}_i$ for $s \neq 1$, it follows from Lemma \ref{lemma1} that (by treating $B = YV$ and $D = \Av_s U\Theta$), for $q_i \in \{1, 2\},i\in[l]$ and sufficiently small $\epsilon$,
\begin{align}
&\P\left(\frac{1}{nl}\left\|Y - \Av_s [(-1)^{q_1} \hat{W}_1, ...,(-1)^{q_l} \hat{W}_l]\right\|_F^2\leq \sigma_z^2 + \epsilon^2 \sigma^2_a
\Big| \{Y =Y_1\} \cap\Ec_\text{aux}\right)\nonumber\\
&= \P\left(\frac{1}{nl}\left\|YV - \Av_s U\Theta\right\|_F^2\leq \sigma_z^2 + \epsilon^2 \sigma^2_a
\Big| \{Y =Y_1\} \cap\Ec_\text{aux}\right)\nonumber\\
&\leq 2^{-\frac{n}{2}\log\frac{\prod_{i=1}^l [\frac{1}{n}V^\intercal Y^\intercal Y V]_{i,i}}{(\sigma_z^2 + \epsilon^2 \sigma^2_a)^l}}\nonumber\\
&\leq 2^{-\frac{n}{2}\log\frac{\det(\frac{1}{n}V^\intercal Y^\intercal Y V)}{(\sigma_z^2 + \epsilon^2 \sigma^2_a)^l}}\label{explain2}\\
&\leq 2^{-\frac{n}{2}\log\frac{\det(\frac{1}{n} Y^\intercal Y )}{(\sigma_z^2 + \epsilon^2 \sigma^2_a)^l}}\nonumber\\
&\leq  2^{-\frac{n}{2}\log\left(\frac{\textmd{det}\left(\sigma_a^2 W^\intercal W + \sigma_z^2 I\right)-\epsilon}{(\sigma_z^2 + \epsilon^2 \sigma^2_a)^l} \right)}\nonumber
\end{align}
where (\ref{explain2}) follows from the Hadamard's inequality \cite{Cover06}. Thus,
\[
\P(\Ec_s|\{Y =Y_1\} \cap\Ec_\text{aux})\leq 2^l\cdot 2^{-\frac{n}{2}\log\left(\frac{\textmd{det}\left(\sigma_a^2 W^\intercal W + \sigma_z^2 I\right)-\epsilon}{(\sigma_z^2 + \epsilon^2 \sigma^2_a)^l} \right)}\nonumber
\]
and hence
\[
\sum_{s=2}^m \P(\Ec_s \cap \Ec_\text{aux})
\leq 2^l \cdot m \cdot 2^{-\frac{n}{2}\log\left(\frac{\textmd{det}\left(\sigma_a^2 W^\intercal W + \sigma_z^2 I\right)-\epsilon}{(\sigma_z^2 + \epsilon^2 \sigma^2_a)^l} \right)}\nonumber
\]
which tends to zero as $m \to \infty$, if
\begin{equation}
\limsup _{m\rightarrow \infty } \frac{\log m}{n_m} < \frac{1}{2}\log\left(\frac{\textmd{det}\left(\sigma_a^2 W^\intercal W + \sigma_z^2 I\right)-\epsilon}{(\sigma_z^2 + \epsilon^2 \sigma^2_a)^l} \right).
\label{case1_final}
\end{equation}
Since $\epsilon > 0$ is
chosen arbitrarily, we have the desired proof of Theorem~1.

\medskip

\textbf{Case 2}: $k\geq 2$. In this case, the signal of interest is $X = X(W, \Sv)$. Fix $\epsilon>0$. First, for $i\in[l]$, we form an estimate of $\|\wv_i\|_2$ as
\begin{align}
\hat{\rho}_i \triangleq \sqrt{\frac{|\frac{1}{n}\|\Yv_i\|_2^2 - \sigma_z^2|}{\sigma_a^2}}.
\end{align}

For $r,\zeta >0$, let $\Qc = \Qc(r, \zeta)$ be a minimal set of points in $\mathbb{R}^k$ satisfying the following properties:
\begin{enumerate}
\item[i)] $\Qc \subseteq \mathcal{B}_k(r)$, where $\mathcal{B}_k(r)$ is the $k$-dimensional hypersphere of radius $r$.
\item[ii)] For any $\bv \in \Bc_k(r)$, there exists $\hat{\wv}\in\mathcal{Q}$ such that $\|\hat{\wv}-\bv\|_2\leq \frac{\zeta}{2}$.
\end{enumerate}

The following properties can be easily proved:
\begin{lemma}
\label{lemma2}
1) For $i\in[l]$, $\lim_{m\rightarrow\infty}\P\left(\exists \hat{\Wv}\in\Qc(\hat{W}_i,\zeta)\textrm{ such that }\|\hat{\Wv}-\wv_i\|_2 < \zeta\right) = 1.$

2) $q(r,\zeta)\triangleq|\Qc(r,\zeta)|$ is monotonically non-decreasing in $r$ for fixed $\zeta$.
\end{lemma}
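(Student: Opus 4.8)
\textbf{Proof proposal for Lemma \ref{lemma2}.}
The plan is to handle the two parts separately: Part 1 reduces to the weak law of large numbers combined with the covering property (ii) of $\Qc$, while Part 2 follows from pushing an optimal net into the smaller ball by a $1$-Lipschitz radial retraction. For Part 1, I would first observe that, conditioned on $\Sv$, the coordinates $Y_{t,i}=\sum_{j=1}^k w_{j,i}A_{t,S_j}+Z_{t,i}$, $t\in[n]$, are i.i.d. $\mathcal N(0,\sigma_a^2\|\wv_i\|_2^2+\sigma_z^2)$, since the columns $\Av_{S_1},\dots,\Av_{S_k}$ have independent $\mathcal N(0,\sigma_a^2)$ entries and $Z$ is independent of $A$. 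Hence $\frac1n\|\Yv_i\|_2^2\to\sigma_a^2\|\wv_i\|_2^2+\sigma_z^2$ in probability, and since $x\mapsto\sqrt{|x-\sigma_z^2|/\sigma_a^2}$ is continuous, $\hat\rho_i\to\|\wv_i\|_2$ in probability (writing $\hat W_i=\hat\rho_i$). Therefore the event $\Ec_i\triangleq\{\hat\rho_i>\|\wv_i\|_2-\tfrac{\zeta}{2}\}$ has probability tending to $1$.

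I would then argue that on $\Ec_i$ the required point exists. Let $\bv$ be the point of $\mathcal B_k(\hat\rho_i)$ closest to $\wv_i$: thus $\bv=\wv_i$ if $\|\wv_i\|_2\le\hat\rho_i$, and $\bv=\frac{\hat\rho_i}{\|\wv_i\|_2}\wv_i$ otherwise (here $\wv_i\neq\mathbf 0$ because all entries of $W$ are nonzero). In either case $\|\bv-\wv_i\|_2=\max(0,\|\wv_i\|_2-\hat\rho_i)<\tfrac{\zeta}{2}$ on $\Ec_i$, and property (ii) applied to $\bv\in\mathcal B_k(\hat\rho_i)$ yields $\hat{\Wv}\in\Qc(\hat\rho_i,\zeta)$ with $\|\hat{\Wv}-\bv\|_2\le\tfrac{\zeta}{2}$, so $\|\hat{\Wv}-\wv_i\|_2<\zeta$ by the triangle inequality. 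Letting $m\to\infty$ gives Part 1. For Part 2, fix $\zeta$ and $r_1\le r_2$, take an optimal net $\Qc(r_2,\zeta)$ so that $|\Qc(r_2,\zeta)|=q(r_2,\zeta)$, and let $\pi\colon\mathbb R^k\to\mathcal B_k(r_1)$ be the nearest-point projection, $\pi(\xv)=\xv$ for $\|\xv\|_2\le r_1$ and $\pi(\xv)=\frac{r_1}{\|\xv\|_2}\xv$ otherwise. Since $\mathcal B_k(r_1)$ is convex, $\pi$ is $1$-Lipschitz and fixes $\mathcal B_k(r_1)$ pointwise. Then $\Qc'\triangleq\pi(\Qc(r_2,\zeta))\subseteq\mathcal B_k(r_1)$ has $|\Qc'|\le q(r_2,\zeta)$, and for any $\bv\in\mathcal B_k(r_1)\subseteq\mathcal B_k(r_2)$, picking $\wv\in\Qc(r_2,\zeta)$ with $\|\wv-\bv\|_2\le\tfrac{\zeta}{2}$ gives $\|\pi(\wv)-\bv\|_2=\|\pi(\wv)-\pi(\bv)\|_2\le\|\wv-\bv\|_2\le\tfrac{\zeta}{2}$; thus $\Qc'$ satisfies both defining properties (i) and (ii) for $\mathcal B_k(r_1)$, and minimality of $\Qc(r_1,\zeta)$ gives $q(r_1,\zeta)\le|\Qc'|\le q(r_2,\zeta)$.

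I expect the only genuinely subtle point to be in Part 2: the net points must themselves lie in the ball (property (i)), so the bare inclusion $\mathcal B_k(r_1)\subseteq\mathcal B_k(r_2)$ is not enough and one really needs the retraction $\pi$, together with the convexity of $\mathcal B_k(r_1)$ to guarantee $\pi$ is $1$-Lipschitz. In Part 1 the only case that requires care is $\hat\rho_i<\|\wv_i\|_2$, which is dealt with above by radially projecting $\wv_i$ onto $\mathcal B_k(\hat\rho_i)$ and absorbing the error into the $\tfrac{\zeta}{2}$ slack; everything else is a routine application of the weak LLN and the continuous mapping theorem.
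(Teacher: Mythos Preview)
Your proof is correct; the paper itself does not supply a proof of Lemma~\ref{lemma2}, merely stating that ``the following properties can be easily proved'' before the lemma, so there is nothing to compare against line by line. Your argument fills this gap cleanly: Part~1 is exactly the intended combination of the weak LLN (giving $\hat\rho_i\to\|\wv_i\|_2$ in probability) with the $\tfrac{\zeta}{2}$-covering property, and your handling of the case $\hat\rho_i<\|\wv_i\|_2$ via the radial projection of $\wv_i$ onto $\mathcal B_k(\hat\rho_i)$ is the right way to absorb the estimation error into the slack. For Part~2 your use of the metric projection $\pi$ onto the convex set $\mathcal B_k(r_1)$ is the standard and essentially unavoidable device---as you correctly note, the bare inclusion $\mathcal B_k(r_1)\subseteq\mathcal B_k(r_2)$ is insufficient because the net points must themselves lie in the smaller ball, and $1$-Lipschitzness of $\pi$ (from convexity) is what transfers the $\tfrac{\zeta}{2}$-covering property.
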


For $i\in[l]$, given $\hat{\rho}_i$ and $\epsilon$, fix $\Qc_i = \Qc_i(\hat{\rho}_i, \epsilon)$. Declare $d(Y)=\{\hat{s}_1, ...,\hat{s}_k\}\subseteq[m]$ is the recovered set of indices of nonzero rows of $W$, if it is the unique set of indices such that
\begin{align}
\frac{1}{nl}\left\|Y - [\Av_{\hat{s}_1}, ...,\Av_{\hat{s}_k}]\left[\hat{\Wv}_1, ..., \hat{\Wv}_l\right]\right\|_F^2\leq \sigma_z^2 + \epsilon^2\sigma_a^2
\end{align}
for some $\hat{\Wv}_i\in\Qc_i$, $i\in[l]$. If there is none or more than one such set, pick an arbitrary set of $k$ indices.

Next, we analyze the average probability of error
\begin{align}
\P(\Ec) = \P\{d(Y) \neq X(W, \Sv)\}.
\end{align}
Without loss of generality, we assume that $S_j = j$ for $j = 1, 2, ..., k$, which gives
\begin{align}
Y = [\Av_1,...,\Av_k ]W + Z
\end{align}
for some $W$. Define the event
\begin{align}
&\Ec_{s_1, s_2,...,s_k}\triangleq \nonumber\\
& \left\{\exists \hat{\Wv}_i\in \Qc_i \textrm{ and } \{s'_1,...,s'_k\}= \{s_1, ...,s_k\}\textrm{ s.t. }\frac{1}{nl}\left\|Y - [\Av_{{s}_1'}, ...,\Av_{{s}_k'}]\left[\hat{\Wv}_1, ..., \hat{\Wv}_l\right] \right\|_F^2\leq \sigma_z^2+\epsilon^2\sigma_a^2\right\}.\nonumber
\end{align}

Define $\sigma_\textrm{max}^2$ and $\sigma_\textrm{min}^2$ to be the largest and smallest eigenvalues of the matrix
\[\frac{1}{{n\sigma_a^2}}[\Av_1,...,\Av_k, \frac{\sigma_a}{\sigma_z}Z]^\intercal[\Av_1,...,\Av_k,\frac{\sigma_a}{\sigma_z}Z]\]
respectively. Then
\begin{align}
\P(\Ec) &=  \P\left(\Ec_{1, 2,...,k}^c \cup \left(\bigcup_{s_1<\cdots<s_k:\{s_1, ..., s_k\}\neq[k]} \Ec_{s_1, s_2,...,s_k} \right)\right)\nonumber\\
&\leq  \P\left({\Ec^c_\text{aux}}\cup \Ec_{1, 2,...,k}^c   \cup\left(\bigcup_{s_1<\cdots<s_k:\{s_1, ..., s_k\}\neq[k]} (\Ec_{s_1, s_2,...,s_k}\cap\Ec_\text{aux}) \right)\right)\nonumber\\
&\leq  \P({\Ec^c_\text{aux}}) +
\P(\Ec_{1, 2,...,k}^c)+\sum_{s_1<\cdots<s_k:\{s_1, ..., s_k\}\neq[k]} \P(\Ec_{s_1, s_2,...,s_k}\cap\Ec_\text{aux})
\label{C2new_Perror}
\end{align}
where
\[
\Ec_{\text{aux}}\triangleq
\left\{ \sigma_{\textrm{max}}^2 \in \left(1 - \epsilon, 1 + \epsilon\right) \right\} \cap \left\{ \sigma_{\textrm{min}}^2 \in \left(1 - \epsilon, 1 + \epsilon\right) \right\}\cap \left( \bigcap_{i=1}^l\left\{\hat{W}_i-\|\wv_i\|_2\in\left(-\epsilon,\epsilon\right)\right\}\right).\nonumber
\]

First, note that $\lim_{m\rightarrow\infty}\P(\Ec_{\text{aux}}) = 1$ due to LLN and the properties of the extreme eigenvalues of random matrices \cite{Silverstein}. Next, consider
\begin{align}
& \frac{1}{nl}\left\|Y-[\Av_1, ..., \Av_k]\left[\hat{\Wv}_1,..., \hat{\Wv}_l\right]  \right\|_F^2\nonumber\\
&= \frac{1}{nl}\left\|[\Av_1, ..., \Av_k]W+Z-[\Av_1, ..., \Av_k]\left[\hat{\Wv}_1,..., \hat{\Wv}_l\right]  \right\|_F^2\nonumber\\
&= \frac{1}{nl}\left\|[\Av_1, ..., \Av_k, \frac{\sigma_a}{\sigma_z}Z]  \left[\begin{array}{c} W-\left[\hat{\Wv}_1,..., \hat{\Wv}_l\right]\\\frac{\sigma_z}{\sigma_a}I_{l\times l}\end{array}\right]  \right\|_F^2\nonumber\\
&\leq \frac{1}{l}\sigma_\textrm{max}^2\sigma_a^2\left\|\left[\begin{array}{c} W-\left[\hat{\Wv}_1,..., \hat{\Wv}_l\right]\\\frac{\sigma_z}{\sigma_a}I_{l\times l}\end{array}\right]\right\|_F^2\nonumber\\
&= \sigma_\textrm{max}^2\left( \frac{\sigma_a^2}{l}\left\|W-\left[\hat{\Wv}_1,..., \hat{\Wv}_l\right]\right\|_F^2 + \sigma_z^2   \right)
\end{align}
By using the fact that $\sigma_\textrm{max}^2\rightarrow 1$ almost surely as $n\rightarrow \infty$ \cite{Silverstein} and Lemma 2-1), we have $\lim_{m\rightarrow\infty}\P(\Ec_{1, 2,...,k}^c)=0$.

Next, we consider $\P(\Ec_{s_1, s_2,...,s_k}\cap {\Ec_{\text{aux}}})$ for $\{s_1, s_2,...,s_k\} \neq [k]$. Note that
\begin{align}
&\P(\Ec_{s_1, s_2,...,s_k}\cap \Ec_{\text{aux}})\nonumber\\
&\leq \P(\Ec_{s_1, s_2,...,s_k}| \Ec_{\text{aux}})\nonumber\\
&= \int\cdots\int_{\{\av_1, ..., \av_k, Z_0\}\in \Ec_{\text{aux}}}  \P(\Ec_{s_1, s_2,...,s_k} | \{\Av_1=\av_1\}\cap\cdots\cap\{\Av_k=\av_k\}\cap\{Z=Z_0\}\cap\Ec_{\text{aux}})\nonumber\\
&~~~~~~~~~~~~~~~~~~~~~~~~~~~~~\times f(\av_1,...,\av_k,Z_0|\Ec_{\text{aux}})d\av_1\cdots d\av_kdZ_0.
\label{C3_oneside_prob}
\end{align}
For notational simplicity, define $\xi\triangleq\sigma_z^2+\epsilon^2 \sigma_a^2$, ${\Tc} \triangleq \{s_1, s_2,...,s_k\} \cap [k]$, ${\Tc}^c \triangleq \{s_1, s_2,...,s_k\}\backslash {\Tc}$, and $\Ec_{\textrm{cond}} \triangleq \{\Av_1=\av_1\}\cap\cdots\cap\{\Av_k=\av_k\}\cap\{Z=Z_0\}\cap\Ec_{\text{aux}}$. For any permutation $(s_1',s_2', ..., s_k')$ of $\{s_1,s_2, ..., s_k\}$ and any $\hat{\Wv}_i\in\mathcal{Q}_i$, $i\in[l]$,
\begin{align}
& \P\left( \frac{1}{nl}\left\| Y- [\Av_{s_1'}, ..., \Av_{s_k'}]\left[\hat{\Wv}_1,..., \hat{\Wv}_l\right]   \right\|_F^2\leq \xi\Big| \Ec_\textrm{cond}   \right)\nonumber\\
& = \P\left( \frac{1}{nl}\left\|  [\Av_1, ..., \Av_k]W+Z - [\Av_{s_1'}, ..., \Av_{s_k'}]\left[\hat{\Wv}_1,..., \hat{\Wv}_l\right]   \right\|_F^2  \leq \xi\Big| \Ec_\textrm{cond}   \right)\label{temp1}
\end{align}
Define the matrix $W'\in\mathbb{R}^{k\times l}$ as
\begin{align}
\underline{W}'_j = \left\{ \begin{array}{ll}
\underline{W}_j & \mbox{if $j\in[k]\backslash \Tc$}\\
\underline{W}_j - \hat{\underline{\Wv}}_i & \mbox{if $j=s_i' \in \Tc$}\end{array} \right.
\end{align}
where $\hat{\underline{\Wv}}_i$ denotes the $i$th row of the matrix $\left[\hat{\Wv}_1,..., \hat{\Wv}_l\right]$. Define $\widetilde{W}'\in\mathbb{R}^{k\times l}$ as
\begin{align}
\underline{\widetilde{W}}'_j = \left\{ \begin{array}{ll}
\hat{\underline{\Wv}}_j & \mbox{if $s_j'\notin\Tc$}\\
\underline{\mathbf{0}} & \mbox{if $s_j' \in \Tc$}\end{array} \right.
\end{align}
where $\underline{\mathbf{0}}$ is a zero row vector of a proper size. Then, continue from (\ref{temp1}), we have
\begin{align}
& \P\left( \frac{1}{nl}\left\| Y- [\Av_{s_1'}, ..., \Av_{s_k'}]\left[\hat{\Wv}_1,..., \hat{\Wv}_l\right]   \right\|_F^2\leq \xi\Big| \Ec_\textrm{cond}   \right)\nonumber\\
& = \P\left( \frac{1}{nl}\left\|  [\Av_1, ..., \Av_k, \frac{\sigma_a}{\sigma_z}Z] \left[\begin{array}{c} W'\\\frac{\sigma_z}{\sigma_a}I\end{array}\right] - [\Av_{s_1'}, ..., \Av_{s_k'}]\widetilde{W}'  \right\|_F^2  \leq \xi\Big| \Ec_\textrm{cond}   \right)\nonumber\\
& \equiv \P\left( \frac{1}{nl}\left\|  [\Av_1, ..., \Av_k, \frac{\sigma_a}{\sigma_z}Z] \left[\begin{array}{c} W'\\\frac{\sigma_z}{\sigma_a}I\end{array}\right] - \widetilde{A} \widetilde{W}_1'  \right\|_F^2  \leq \xi\Big| \Ec_\textrm{cond}   \right)\label{explain3}\\
&= \P\left( \frac{1}{nl}\left\|  [\Av_1, ..., \Av_k, \frac{\sigma_a}{\sigma_z}Z] \left[\begin{array}{c} W'\\\frac{\sigma_z}{\sigma_a}I\end{array}\right]V - \widetilde{A} U\Theta  \right\|_F^2  \leq \xi\Big| \Ec_\textrm{cond}   \right)\label{explain4}
\end{align}
where in (\ref{explain3}) $\widetilde{W}_1'$ denotes matrix formed by removing the zero rows in $\widetilde{W}'$, and $\widetilde{A}$ denotes the matrix by removing columns of $[\Av_{s_1'}, ..., \Av_{s_k'}]$ indexed by the indices of the zero rows of $\widetilde{W}'$. To reach (\ref{explain4}), let $\widetilde{W}_1' = U\Theta V^\intercal$ denote the singular value decomposition.
The follow lemma, the proof of which is presented at the end of this appendix, is useful.
\begin{lemma}
\label{lemma3}
Let $B\in\mathbb{R}^{p\times q}$, $D\in\mathbb{R}^{q\times r}$. Let $\sigma_b^2$ denote the smallest eigenvalue of $B^\intercal B$. Then
\[
\det((BD)^\intercal BD)\geq (\sigma_b^2)^r \det(D^\intercal D).
\]
\end{lemma}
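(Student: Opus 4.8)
The plan is to rewrite $(BD)^\intercal BD = D^\intercal(B^\intercal B)D$ and to exploit a Loewner-order lower bound on $B^\intercal B$. Since $B^\intercal B$ is symmetric positive semidefinite with smallest eigenvalue $\sigma_b^2$, we have $B^\intercal B \succeq \sigma_b^2 I_q$; equivalently $B^\intercal B = \sigma_b^2 I_q + C$ for some $C \succeq 0$. Conjugating by $D$ preserves positive semidefiniteness, so
\[
D^\intercal(B^\intercal B)D = \sigma_b^2\, D^\intercal D + D^\intercal C D \succeq \sigma_b^2\, D^\intercal D .
\]
Thus, writing $M \triangleq D^\intercal(B^\intercal B)D = (BD)^\intercal BD$ and $N \triangleq \sigma_b^2\, D^\intercal D$, both $r\times r$ and symmetric, we have reduced the claim to $M \succeq N \succeq 0 \Rightarrow \det M \geq \det N$, together with the identity $\det N = (\sigma_b^2)^r \det(D^\intercal D)$.

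The only nonroutine ingredient is the monotonicity of the determinant on the positive semidefinite cone, which I would prove by a short case split. If $N$ is singular, then $\det N = 0 \leq \det M$ since $M = (BD)^\intercal BD$ is positive semidefinite. Otherwise $N \succ 0$, and I factor
\[
M = N^{1/2}\bigl(I + N^{-1/2}(M-N)N^{-1/2}\bigr)N^{1/2}.
\]
Because $M - N \succeq 0$, the congruent matrix $N^{-1/2}(M-N)N^{-1/2}$ is positive semidefinite, so $I + N^{-1/2}(M-N)N^{-1/2}$ has all eigenvalues $\geq 1$ and hence determinant $\geq 1$. Taking determinants of the factorization gives $\det M = \det N \cdot \det\bigl(I + N^{-1/2}(M-N)N^{-1/2}\bigr) \geq \det N$.

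Applying this with the particular $M$ and $N$ above yields $\det\bigl((BD)^\intercal BD\bigr) \geq \det(\sigma_b^2\, D^\intercal D) = (\sigma_b^2)^r \det(D^\intercal D)$, as claimed. There is no real obstacle here — the statement is elementary linear algebra — and the only points requiring a little care are the degenerate situations $\sigma_b^2 = 0$ or $D^\intercal D$ singular, both of which are subsumed by the case split since the desired inequality then just asserts $\det M \geq 0$, true because $M$ is positive semidefinite.
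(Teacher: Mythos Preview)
Your proof is correct and takes a genuinely different route from the paper's. The paper proceeds by writing the eigendecomposition $B^\intercal B = J G^2 J^\intercal$ with $G = \diag(\sigma_{b,1},\dots,\sigma_{b,q})$, setting $F = D^\intercal J G$, and expanding $\det(FF^\intercal)$ via the Cauchy--Binet formula as a sum of squared $r\times r$ minors; each minor factors as a minor of $D^\intercal J$ times a product of $r$ singular values, and bounding every such product below by $(\sigma_b)^r$ collapses the sum back (again by Cauchy--Binet) to $(\sigma_b^2)^r\det(D^\intercal D)$. Your argument instead stays at the level of the Loewner order: $B^\intercal B \succeq \sigma_b^2 I$ implies $D^\intercal B^\intercal B D \succeq \sigma_b^2 D^\intercal D$, and then you invoke monotonicity of the determinant on the positive semidefinite cone. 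This is shorter and more conceptual; amusingly, the paper itself cites exactly this determinant monotonicity (as \cite[Corollary 8.4.15]{matrixmath}) a few lines earlier in the proof of Theorem~1, so your approach is entirely in the spirit of the surrounding argument. The Cauchy--Binet route has the minor advantage of being fully explicit and self-contained, but your proof is cleaner and handles the degenerate cases more transparently.
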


Let $M \triangleq [\Av_1, ..., \Av_k, \frac{\sigma_a}{\sigma_z}Z] \left[\begin{array}{c} W'\\\frac{\sigma_z}{\sigma_a}I\end{array}\right]V$. Conditioned on $\Ec_\textrm{cond}$ and the chosen $\Qc_i$ for $i\in[l]$, $M$ is fixed. According to Lemma \ref{lemma3} (treating $B = \frac{1}{\sqrt{n}}[\Av_1, ..., \Av_k, \frac{\sigma_a}{\sigma_z}Z]$ and $D = \left[\begin{array}{c} W'\\\frac{\sigma_z}{\sigma_a}I\end{array}\right]V$),
\begin{align}
\det \left(\frac{1}{n} M^\intercal M \right) \geq ((1-\epsilon) \sigma_a^2)^l \det \left( \left[\begin{array}{c} W'\\\frac{\sigma_z}{\sigma_a}I\end{array}\right]^\intercal \left[\begin{array}{c} W'\\\frac{\sigma_z}{\sigma_a}I\end{array}\right]\right).
\end{align}
Continue with (\ref{explain4}). Using Lemma \ref{lemma2} (treating $B=M$ and $D=\widetilde{A} U\Theta $), we have
\begin{align}
&\P\left( \frac{1}{nl}\left\| Y- [\Av_{s_1'}, ..., \Av_{s_k'}]\left[\hat{\Wv}_1,..., \hat{\Wv}_l\right]   \right\|_F^2\leq \xi\Big| \Ec_\textrm{cond}   \right)\nonumber\\
&\leq 2^{-\frac{n}{2}\log\frac{\prod_{i=1}^l [\frac{1}{n} M^\intercal M]_{i,i}}{(\sigma_z^2+\epsilon^2 \sigma_a^2)^l} }\nonumber\\
&\leq 2^{-\frac{n}{2}\log\frac{\det (\frac{1}{n} M^\intercal M)}{(\sigma_z^2+\epsilon^2 \sigma_a^2)^l} }\nonumber\\
&\leq 2^{-\frac{n}{2}\log\frac{\left((1-\epsilon) \sigma_a^2\right)^l \det \left( \left[\begin{array}{c} W'\\\frac{\sigma_z}{\sigma_a}I\end{array}\right]^\intercal \left[\begin{array}{c} W'\\\frac{\sigma_z}{\sigma_a}I\end{array}\right]\right)}{(\sigma_z^2+\epsilon^2 \sigma_a^2)^l}}\nonumber\\
&\leq 2^{-\frac{n}{2}\log\frac{\left((1-\epsilon) \sigma_a^2\right)^l \det \left( \left[\begin{array}{c} \underline{W}_{[k]\backslash\Tc}\\\frac{\sigma_z}{\sigma_a}I\end{array}\right]^\intercal \left[\begin{array}{c} \underline{W}_{[k]\backslash\Tc}\\\frac{\sigma_z}{\sigma_a}I\end{array}\right]\right)}{(\sigma_z^2+\epsilon^2 \sigma_a^2)^l}}
\label{explain1}\\
&= 2^{-\frac{n}{2}\log\frac{\left((1-\epsilon) \sigma_a^2\right)^l \det \left(\underline{W}_{[k]\backslash\Tc}^\intercal \underline{W}_{[k]\backslash\Tc} + \frac{\sigma_z^2}{\sigma_a^2}I \right)}{(\sigma_z^2+\epsilon^2 \sigma_a^2)^l}}
\end{align}
where (\ref{explain1}) uses the fact that
\[
\left[\begin{array}{c} W'\\\frac{\sigma_z}{\sigma_a}I\end{array}\right]^\intercal \left[\begin{array}{c} W'\\\frac{\sigma_z}{\sigma_a}I\end{array}\right] = \left[\begin{array}{c} \underline{W}_{[k]\backslash\Tc}\\\frac{\sigma_z}{\sigma_a}I\end{array}\right]^\intercal \left[\begin{array}{c} \underline{W}_{[k]\backslash\Tc}\\\frac{\sigma_z}{\sigma_a}I\end{array}\right] + \left[\begin{array}{c} \underline{W}'_{\Tc}\\O\end{array}\right]^\intercal \left[\begin{array}{c} \underline{W}'_{\Tc}\\O\end{array}\right]
\]
where $O$ denotes the matrix with elements all being zeros, and the fact that \cite[Corollary 8.4.15]{matrixmath}, for positive semidefinite $B, D\in\mathbb{R}^{l\times l}$, $\det(B+D)\geq \det(B)$. By the union of events bound,
\begin{align}
&\P(\Ec_{s_1, s_2,...,s_k}| \Ec_{\textrm{cond}})\nonumber\\
&\leq  \sum_{{\{s_1',...,s_k'\}=\{s_1,...,s_k\}}}  \P\Bigg(\forall i, \exists \hat{\Wv}_i \in \Qc_i \textrm{ such that } \frac{1}{nl}\left\| Y- [\Av_{s_1'}, ..., \Av_{s_k'}]\left[\hat{\Wv}_1,..., \hat{\Wv}_l\right]   \right\|_F^2\leq \xi \bigg| \Ec_{\textrm{cond}}\Bigg)\nonumber\\
&\leq \sum_{{\{s_1',...,s_k'\}=\{s_1,...,s_k\}}} \sum_{\hat{\Wv}_1\in\Qc_1}\cdots\sum_{\hat{\Wv}_l\in\Qc_l} \P\Bigg(\frac{1}{nl}\left\| Y- [\Av_{s_1'}, ..., \Av_{s_k'}]\left[\hat{\Wv}_1,..., \hat{\Wv}_l\right]   \right\|_F^2\leq \xi \bigg|  \Ec_{\textrm{cond}}\Bigg)\nonumber\\
&\leq k!\cdot\left(\prod_{i=1}^l|\mathcal{Q}_i|\right)\cdot 2^{-\frac{n}{2}\log\frac{\left((1-\epsilon) \sigma_a^2\right)^l \det \left(\underline{W}_{[k]\backslash\Tc}^\intercal \underline{W}_{[k]\backslash\Tc} + \frac{\sigma_z^2}{\sigma_a^2}I \right)}{(\sigma_z^2+\epsilon^2 \sigma_a^2)^l}}.\nonumber
\end{align}
Furthermore, conditioned on $\Ec_{\text{aux}}$, $\hat{\rho}_i<\|\wv_i\|_2+\epsilon$ for $i\in[l]$ and hence $|\mathcal{Q}_i| \leq q_i(\|\wv_i\|_2+\epsilon, \epsilon)$ by Lemma \ref{lemma2}-2). Thus,
\begin{align}
\P(\Ec_{s_1, s_2,...,s_k}\cap \Ec_{\text{aux}})\leq k! \cdot \left(\prod_{i=1}^l q_i(\|\wv_i\|_2+\epsilon, \epsilon)\right) \cdot 2^{-\frac{n}{2}\log\frac{\left((1-\epsilon) \sigma_a^2\right)^l \det \left(\underline{W}_{[k]\backslash\Tc}^\intercal \underline{W}_{[k]\backslash\Tc} + \frac{\sigma_z^2}{\sigma_a^2}I \right)}{(\sigma_z^2+\epsilon^2 \sigma_a^2)^l}}.
\label{general_bound}
\end{align}
Note that the probability upper-bound (\ref{general_bound}) depends on $s_1,...,s_k$ only through $\Tc$. Grouping the ${m-k}\choose {k-|{\Tc}|}$ events $\{\Ec_{s_1, s_2,...,s_k}\cap \Ec_{\text{aux}}\}$ with the same $\Tc$,
\begin{align}
&\P(\Ec)\nonumber\\
&\leq  \P(\Ec^c_\text{aux}) +\P(\Ec_{1, 2, ..., k}^c) +\sum_{{\Tc}\subset[k]} {{m-k}\choose {k-|{\Tc}|}} \cdot k!\cdot \left(\prod_{i=1}^l q_i(\|\wv_i\|_2+\epsilon, \epsilon)\right) \cdot 2^{-\frac{n}{2}\log\frac{\left((1-\epsilon) \sigma_a^2\right)^l \det \left(\underline{W}_{[k]\backslash\Tc}^\intercal \underline{W}_{[k]\backslash\Tc} + \frac{\sigma_z^2}{\sigma_a^2}I \right)}{(\sigma_z^2+\epsilon^2 \sigma_a^2)^l}}\nonumber\\
&\leq  \P(\Ec^c_\text{aux}) +\P(\Ec_{1, 2, ..., k}^c) +k!\cdot \left(\prod_{i=1}^l q_i(\|\wv_i\|_2+\epsilon, \epsilon)\right) \cdot \sum_{{\Tc}\subset[k]} 2^{(k-|{\Tc}|)\log m}\cdot 2^{-\frac{n}{2}\log\frac{\left((1-\epsilon) \sigma_a^2\right)^l \det \left(\underline{W}_{[k]\backslash\Tc}^\intercal \underline{W}_{[k]\backslash\Tc} + \frac{\sigma_z^2}{\sigma_a^2}I \right)}{(\sigma_z^2+\epsilon^2 \sigma_a^2)^l}}\nonumber\\
&= \P(\Ec^c_\text{aux}) +\P(\Ec_{1, 2, ..., k}^c) +k!\cdot \left(\prod_{i=1}^l q_i(\|\wv_i\|_2+\epsilon, \epsilon)\right)  \cdot\sum_{{\Tc}\subseteq[k]} 2^{|{\Tc}|\log m} \cdot 2^{-\frac{n}{2} \log\frac{\left((1-\epsilon) \sigma_a^2\right)^l \det \left(\underline{W}_{\Tc}^\intercal \underline{W}_{\Tc} + \frac{\sigma_z^2}{\sigma_a^2}I \right)}{(\sigma_z^2+\epsilon^2 \sigma_a^2)^l} }\nonumber
\end{align}
which tends to zero as $m\rightarrow \infty$, if
\begin{align}
\limsup_{m\rightarrow \infty}\frac{\log m}{n_m}<\frac{1}{2|{\Tc}|} \log\frac{\left((1-\epsilon) \sigma_a^2\right)^l \det \left(\underline{W}_{\Tc}^\intercal \underline{W}_{\Tc} + \frac{\sigma_z^2}{\sigma_a^2}I \right)}{(\sigma_z^2+\epsilon^2 \sigma_a^2)^l}
\label{general_condition}
\end{align}
for all $\Tc\subseteq[k]$. Since $\epsilon > 0$ is arbitrarily chosen, the proof of Theorem 1 is complete.

\medskip
Next, we prove Lemma 1. For $j\in\Sc$, $(b_{i,j}-D_{i,j})^2/\theta_{j}$ is a noncentral $\chi^2$ random variable. Its moment generating function is \cite{Lancaster_x2} (for $t< 1/2$)
\begin{align}
\E[e^{t(b_{i,j}-D_{i,j})^2/\theta_j}] &= \frac{e^{\frac{t{b_{i,j}^2}/{\theta_j}}{1-2 t} }}{(1-2t)^{\frac{1}{2}}}.
\end{align}
By changing variable $\frac{\theta_j t}{nl}\rightarrow t$, we have
\begin{align}
\E[e^{\frac{t(b_{i,j}-D_{i,j})^2}{nl}}] &= \frac{e^{\frac{\frac{t}{nl}{b_{i,j}^2}}{1-\frac{2 \theta_j t}{nl}} }}{(1-\frac{2\theta_j t}{nl})^{\frac{1}{2}}}.
\end{align}
For $j\in[l]\backslash\Sc$ with $\Dv_j\equiv\mathbf{0}$, we additionally define $\theta_j = 0$. In this case,
\begin{align}
\E[e^{\frac{t(b_{i,j}-D_{i,j})^2}{nl}}] &=\E[e^{\frac{tb_{i,j}^2}{nl}}] =  e^{\frac{t}{nl}{b_{i,j}^2}}= \frac{e^{\frac{\frac{t}{nl}{b_{i,j}^2}}{1-\frac{2 \theta_j t}{nl}} }}{(1-\frac{2\theta_j t}{nl})^{\frac{1}{2}}}.
\end{align}

Define
\begin{align}
S_n \triangleq \frac{1}{nl}\|B-D\|_F^2 = \frac{1}{l} \sum_{j=1}^l \frac{1}{n}\|\bv_j - \Dv_j\|_2^2.
\end{align}
Then, we have
\begin{align}
\E[e^{tS_n}] &= \E[e^{ \frac{t}{nl}\|B-D\|_F^2}]\\
&= \E[e^{  \frac{t}{l} \sum_{j=1}^l \frac{1}{n}\|\bv_j - \Dv_j\|_2^2 }]\\
&= \prod_{j=1}^{l} \E[e^{  \frac{t}{nl} \|\bv_j - \Dv_j\|_2^2 }]\\
&= \prod_{j=1}^{l} \frac{e^{\frac{\frac{t}{nl}{\|\bv_j\|_2^2}}{1-\frac{2 \theta_j t}{nl}} }}{(1-\frac{2\theta_j t}{nl})^{\frac{n}{2}}}
\end{align}

The Chernoff bound indicates that
\begin{align}
\P(S_n \leq \gamma) &\leq  \min_{s>0}e^{s\gamma}\E[e^{-sS_n}]\\
&= \min_{s>0} e^{s\gamma} \prod_{j=1}^{l} \frac{e^{\frac{-\frac{s}{nl}{\|\bv_j\|_2^2}}{1+\frac{2 \theta_j s}{nl}} }}{(1+\frac{2\theta_j s}{nl})^{\frac{n}{2}}}\\
&= \min_{p<0} e^{-p\gamma} \prod_{j=1}^{l} \frac{e^{\frac{\frac{p}{nl}{\|\bv_j\|_2^2}}{1-\frac{2 \theta_j p}{nl}} }}{(1-\frac{2\theta_j p}{nl})^{\frac{n}{2}}}\\
&= \exp\left\{\min_{p<0}\left\{-p\gamma+ \sum_{j=1}^l \left[\frac{\frac{p}{nl}{ \|\bv_j\|_2^2}}{1-\frac{2 \theta_j p}{nl}}  - {\frac{n}{2}}\log\left(1-\frac{2 \theta_j p}{nl}\right)\right]  \right\}\right\}\\
&= \exp\left\{\min_{p<0}\left\{-lp\gamma+ \sum_{j=1}^l \left[\frac{\frac{p}{n}{ \|\bv_j\|_2^2}}{1-\frac{2 \theta_j p}{n}}  - {\frac{n}{2}}\log\left(1-\frac{2 \theta_j p}{n}\right)\right]  \right\}\right\}\\
&= \exp\left\{\min_{p<0}\left\{-lp\gamma - \sum_{j=1}^l \underbrace{\frac{\frac{-p}{n}{ \|\bv_j\|_2^2}}{1-\frac{2 \theta_j p}{n}}}_{>0}  -  {\frac{n}{2}}\log\prod_{j=1}^l\left(1-\frac{2 \theta_j p}{n}\right)  \right\}\right\}\\
&\leq \exp\left\{\min_{p<0}\left\{-lp\gamma - l\left(\prod_{j=1}^l \frac{\frac{-p}{n}{ \|\bv_j\|_2^2}}{1-\frac{2 \theta_j p}{n}}\right)^{\frac{1}{l}}  -  {\frac{n}{2}}\log\prod_{j=1}^l\left(1-\frac{2 \theta_j p}{n}\right)  \right\}\right\}\label{step2}\\
&= \exp\left\{\min_{p<0}\left\{-lp\gamma - l \frac{\left(\prod_{j=1}^l \frac{-p}{n}{ \|\bv_j\|_2^2}\right)^{\frac{1}{l}}}{\left(\prod_{j=1}^l(1-\frac{2 \theta_j p}{n}) \right)^{\frac{1}{l}}}  -  {\frac{nl}{2}}\log\left(\left(\prod_{j=1}^l\left(1-\frac{2 \theta_j p}{n}\right)\right)^{\frac{1}{l}} \right) \right\}\right\}\\
&= \exp\left\{\min_{p<0}\left\{\underbrace{-lp\gamma + lp \frac{\left(\prod_{j=1}^l \frac{1}{n}{ \|\bv_j\|_2^2}\right)^{\frac{1}{l}}}{\left(\prod_{j=1}^l(1-\frac{2 \theta_j p}{n}) \right)^{\frac{1}{l}}}  -  {\frac{nl}{2}}\log\left(\left(\prod_{j=1}^l\left(1-\frac{2 \theta_j p}{n}\right)\right)^{\frac{1}{l}}\right)}_{\triangleq f(p)}  \right\}\right\}\\
&= \exp\left\{\min_{p<0} f(p)\right\}.
\end{align}
where (\ref{step2}) follows from the fact that the arithmetic mean is no smaller than the geometric mean. On the other hand, define the function
\begin{align}
g(p, \theta) = -lp\gamma + lp \frac{\left(\prod_{j=1}^l \frac{1}{n}{ \|\bv_j\|_2^2}\right)^{\frac{1}{l}}}{1-\frac{2 \theta p}{n}} - {\frac{nl}{2}}\log\left(1-\frac{2 \theta p}{n}\right) .
\end{align}
Recall that $ \left(\prod_{j=1}^l \frac{1}{n}{ \|\bv_j\|_2^2}\right)^{\frac{1}{l}}= \alpha $. It can be readily seen that, for a fixed $p<0$,
\begin{align}
f(p) \leq \max_{\theta > 0}  g(p,\theta)
\end{align}
which is because there exists $\theta>0$ such that $1-\frac{2 \theta p}{n} = \left(\prod_{j=1}^l\left(1-\frac{2 \theta_j p}{n}\right)\right)^{\frac{1}{l}}$. Thus,
\begin{align}
\min_{p<0}f(p) \leq \min_{p<0}\left(\max_{\theta > 0}  g(p,\theta)\right).
\end{align}
Our goal is to show
\begin{align}
 \min_{p<0}\left(\max_{\theta > 0}  g(p,\theta)\right) = -\frac{nl}{2}\log \frac{\alpha}{\gamma}
\end{align}
which will lead to $\P(S_n \leq \gamma)\leq -\frac{nl}{2}\log \frac{\alpha}{\gamma}$ as desired. To this end, we first consider, for a fixed $p$,
\begin{align}
\frac{\partial g(p,\theta)}{\partial \theta} = \frac{pl(\frac{2p\alpha}{n}- \frac{2p\theta}{n}+1)}{(1-\frac{2\theta p}{n})^2}
\end{align}
By setting $\frac{\partial g(p,\theta)}{\partial \theta} = 0$, we have the stationary point
$
\theta^* = \alpha + \frac{n}{2p}
$.
Examine the second derivative
\begin{align}
\frac{\partial^2 g(p,\theta)}{\partial \theta^2}\Bigg|_{\theta = \theta^*} &= \frac{(1-\frac{2\theta p}{n})(-\frac{2p^2l}{n})(\frac{2p\theta}{n} - \frac{4p\alpha}{n}-1 )  }{(1-\frac{2\theta p}{n})^4}\Bigg|_{\theta = \theta^*}= \frac{(-\frac{2p^2l}{n}) }{(\frac{2\alpha p}{n})^2} < 0.
\end{align}
Due to the constraint $\theta >0$, we have
\begin{align}
\max_{\theta > 0}  g(p,\theta) = \left\{ \begin{array}{ll}
-pl\gamma -\frac{nl}{2}-\frac{nl}{2}\log(-\frac{2p\alpha}{n}) & \mbox{if $p\leq -\frac{n}{2\alpha}$};\\
pl(\alpha-\gamma) & \mbox{if $-\frac{n}{2\alpha} < p < 0 $}.\end{array} \right.
\end{align}

Next, we calculate $\min_{p<0}\left(\max_{\theta > 0}  g(p,\theta)\right)$. First,
\begin{align}
\min_{ -\frac{n}{2\alpha}<p<0}\left(\max_{\theta > 0}  g(p,\theta) \right)= \min_{ -\frac{n}{2\alpha}<p<0}pl(\alpha-\gamma)=-\frac{nl}{2}\left(1-\frac{\gamma}{\alpha}\right).
\end{align}
Then, to figure out $\min_{p\leq  -\frac{n}{2\alpha}}\left(\max_{\theta > 0}  g(p,\theta)\right)$, we compute
\begin{align}
\frac{\partial \max_{\theta > 0}  g(p,\theta)}{\partial p} &= \frac{\partial \left(-pl\gamma -\frac{nl}{2}-\frac{nl}{2}\log(-\frac{2p\alpha}{n}) \right)}{\partial p}= -l\gamma - \frac{nl}{2p} =_{\textrm{set}} 0
\end{align}
which gives the stationary point $
p^* = -\frac{n}{2\gamma}$.
Check for the second derivative,
\begin{align}
\frac{\partial^2 \max_{\theta > 0}  g(p,\theta)}{\partial p^2}\Bigg|_{p=p^*} &= \frac{nl}{2(p^*)^2}>0.
\end{align}
Therefore, $p^* = -\frac{n}{2\gamma} (\leq -\frac{n}{2\alpha})$ is the minimizer. As a result,
\begin{align}
\min_{p\leq  -\frac{n}{2\alpha}} \left(\max_{\theta > 0}  g(p,\theta) \right)= -pl\gamma -\frac{nl}{2}-\frac{nl}{2}\log(-\frac{2p\alpha}{n})\Bigg|_{p = p^*} = -\frac{nl}{2}\log\frac{\alpha}{\gamma}.
\end{align}

Overall,
\begin{align}
\min_{p<0}\left(\max_{\theta > 0}  g(p,\theta) \right)= \min\left(-\frac{nl}{2}\left(1-\frac{\gamma}{\alpha}\right) ,-\frac{nl}{2}\log\frac{\alpha}{\gamma}\right)
\end{align}
Using the fact that $0\leq 1-\frac{1}{x} \leq \log x$ for $x>1$, we finally have
\begin{align}
\min_{p<0}\left(\max_{\theta > 0}  g(p,\theta)\right) = -\frac{nl}{2}\log\frac{\alpha}{\gamma}.
\end{align}

Therefore,
\begin{align}
\P(S_n \leq \gamma) &\leq\exp\left\{\min_{p<0}f(p) \right\}\\
&\leq \min_{p<0}\left(\max_{\theta > 0}  g(p,\theta)\right)\\
&= 2^{-\frac{n}{2}\log \frac{\prod_{j=1}^l \frac{1}{n}{ \|\bv_j\|_2^2}}{\gamma^l}}\\
&= 2^{-\frac{n}{2}\log \frac{\prod_{j=1}^l { [\frac{1}{n}B^\intercal B]_{j,j}}}{\gamma^l}}.
\end{align}

The remaining task is to prove Lemma \ref{lemma3}. Let $\sigma_{b, 1}^2 \geq \cdots \geq \sigma_{b, q}^2$ be the $q$ eigenvalues of $B^\intercal B$, where $\sigma_{b, q}^2 = \sigma_{b}^2$. The eigen-decomposition states that there exists a unitary matrix $J\in\mathbb{R}^{q\times q}$, such that $B^\intercal B = J G G J^\intercal$, where $G\in\mathbb{R}^{q\times q}$ is a diagonal matrix with the $i$th diagonal element being $\sigma_{b, i}$. Thus, $ D^\intercal B^\intercal B D = D^\intercal J G G J^\intercal D = F T$,
where $F = D^\intercal J G$ and $T = F^\intercal$. Note that
\begin{align}
&\det((BD)^\intercal B D)\nonumber\\
&= \det(FT) \nonumber\\
&= \sum_{1\leq j_1<\cdots<j_r\leq q} \det\left[\begin{array}{ccc} f_{1, j_1} & \cdots & f_{1, j_r}\\ \vdots & &\vdots \\ f_{r, j_1} & \cdots & f_{r, j_r} \end{array}\right]
\det\left[\begin{array}{ccc} t_{j_1, 1} & \cdots & t_{j_1, r}\\ \vdots & &\vdots \\ t_{k_r, 1} & \cdots & t_{j_r, r} \end{array}\right]\label{lemma4_step1}\\
&= \sum_{1\leq j_1<\cdots<j_r\leq q} \left(\det\left[\begin{array}{ccc} f_{1, j_1} & \cdots & f_{1, j_r}\\ \vdots & &\vdots \\ f_{r, j_1} & \cdots & f_{r, j_r} \end{array}\right]\right)^2\nonumber\\
&= \sum_{1\leq j_1<\cdots<j_r\leq q} \left(\det\left\{\left[\begin{array}{ccc} [ D^\intercal J ]_{1, j_1} & \cdots & [ D^\intercal J ]_{1, j_r}\\ \vdots & &\vdots \\ {[ D^\intercal J ]}_{r, j_1} & \cdots & [ D^\intercal J ]_{r, j_r} \end{array}\right] \diag(\sigma_{b, j_1}, ..., \sigma_{b, j_r})\right\}  \right)^2\nonumber\\
& \geq (\sigma_b^2)^r \sum_{1\leq j_1<\cdots<j_r\leq q} \left(\det\left[\begin{array}{ccc} [ D^\intercal J ]_{1, j_1} & \cdots & [ D^\intercal J ]_{1, j_r}\\ \vdots & &\vdots \\ {[ D^\intercal J ]}_{r, j_1} & \cdots & [ D^\intercal J ]_{r, j_r} \end{array}\right] \right)^2\nonumber\\
&= (\sigma_b^2)^r \det(D^\intercal J^\intercal J D)\nonumber\\
&= (\sigma_b^2)^r \det(D^\intercal D)\nonumber
\end{align}
where (\ref{lemma4_step1}) is due to the Binet-Cauchy formula \cite{Dym}.

\section{Proof of Theorem 2}
\label{app:th2}
To establish this theorem, we prove the following equivalent statement:

If there exist a sequence of matrices $\{A^{(m)}\}_{m=k}^\infty$, $A^{(m)}\in\mathbb{R}^{n_m\times m}$, and a sequence of support recovery maps $\{d^{(m)}\}_{m=k}^\infty$, $d^{(m)}:\mathbb{R}^{n_m}\mapsto 2^{\{1, 2, ..., m\}}$, such that
\[
\frac{1}{n_m m}\|A^{(m)}\|_F^2 \leq \sigma_a^2
\]
and
\[
\lim_{m\rightarrow\infty} \P\{d^{(m)}(A^{(m)} X + Z)\neq \supp(X(W, \Sv))\} = 0
\]
then
\[
\limsup_{m\rightarrow\infty}\frac{\log m}{n_m} \leq c(W).
\]

\medskip

For any $\Tc\subseteq[k]$, denote the tuple of random
variables $(S_l:l\in\Tc)$ by $S(\Tc)$.  For notation simplicity, let $\overline{P}_e^{(m)} \triangleq \P\{d^{(m)}(A^{(m)} X + Z)\neq \supp(X(W, \Sv))\} $.  From Fano's inequality
\cite{Cover06}, we have
\begin{align}
H(S(\Tc)|Y) &\leq H(S_1, ..., S_k|Y) \nonumber\\
&\leq \log k!+ H(\{S_1, ..., S_k\}|Y)\nonumber\\
& \leq \log k! + \overline{P}_e \log{m\choose k} + 1.
\label{T2_fanos}
\end{align}
On the other hand,
\begin{align}
H(S(\Tc)|S(\Tc^c))
&=\log\left(\prod\limits_{q=0}^{|\Tc|-1}(m-(k-|\Tc|)-q)\right)\nonumber\\
&= |\Tc|\log m - n\epsilon_{1,n}
\label{general_starting_point}
\end{align}
where $\Tc^c \triangleq [k] \backslash \Tc$ and
\[
\epsilon_{1,n} \triangleq  \frac{1}{n} \log
\left({m^{|\Tc|}}/{\prod\limits_{q=0}^{|\Tc|-1}(m-(k-|\Tc|)-q)}\right)\nonumber
\]
which tends to zero as $n\rightarrow \infty$.  Hence, combining
(\ref{T2_fanos}) and (\ref{general_starting_point}), we have
\begin{align}
|\Tc|\log m &= H(S(\Tc)|S(\Tc^c))+n\epsilon_{1,n}\nonumber\\
&= I(S(\Tc);Y|S(\Tc^c)) + H(S(\Tc)|Y,S(\Tc^c))+n\epsilon_{1,n}\nonumber\\
&\leq  I(S(\Tc);Y|S(\Tc^c)) + H(S(\Tc)|Y)+n\epsilon_{1,n}\label{explain6}\\
&\leq  I(S(\Tc);Y|S(\Tc^c)) + \log k! + \overline{P}_e^{(m)} \log{m\choose k} +1+ n\epsilon_{1,n} \nonumber\\
&= \sum_{i=1}^n I(\underline{\Yv}_i;S(\Tc)|\underline{Y}_{[i-1]}, S(\Tc^c))+ \log k! + \overline{P}_e^{(m)} \log{m\choose k} + 1+ n\epsilon_{1,n}\nonumber\\
&= \sum_{i=1}^n \left(h(\underline{\Yv}_i|\underline{Y}_{[i-1]}, S(\Tc^c)) - h(\underline{\Yv}_i|\underline{Y}_{[i-1]}, S([k]))\right) + \log k! + \overline{P}_e^{(m)} \log{m\choose k} + 1+ n\epsilon_{1,n}\nonumber\\
&\leq \sum_{i=1}^n \left(h(\underline{\Yv}_i|S(\Tc^c))- h(\underline{\Yv}_i|S_1,...,S_k)\right)+ \log k! + \overline{P}_e^{(m)} \log{m\choose k} + 1+ n\epsilon_{1,n}\label{explain7}\\
&= \sum_{i=1}^n \left(h(\underline{\Yv}_i|S(\Tc^c))- h(\underline{\Zv}_i)\right)+ \log k! + \overline{P}_e^{(m)} \log{m\choose k} + 1+ n\epsilon_{1,n}\label{reduce_to_noise}
\end{align}
where $\underline{Y}_{[i-1]}$ denotes the set $\{\underline{\Yv}_1, ..., \underline{\Yv}_{i-1}\}$. To explain some intermediate steps, (\ref{explain6}) follows from the fact that conditioning reduces entropy, (\ref{explain7}) holds because $\underline{\Yv}_i$ is independent of $\underline{Y}_{[i-1]}$ when conditioned on $S([k])$, and (\ref{reduce_to_noise}) follows since the measurement matrix is fixed
and $\underline{\Zv}_i$ is independent of $(S_1,\ldots, S_k)$.

Consider
\begin{align}
&h(\underline{\Yv}_i|S(\Tc^c))\nonumber\\
&= h\left(A_{i, S([k])} W + \underline{\Zv}_i \Big|S(\Tc^c)\right)  \nonumber\\
&= h\left(A_{i, S(\Tc)}\underline{W}_{\Tc} + \underline{\Zv}_i \Big|S(\Tc^c)\right)  \nonumber\\
&\leq  h\left(A_{i, S(\Tc)}\underline{W}_{\Tc} + \underline{\Zv}_i\right) \nonumber\\
&\leq  \frac{1}{2}\log\left((2\pi e)^l\cdot  \det\left(\E[(A_{i, S(\Tc)}\underline{W}_{\Tc} + \underline{\Zv}_i)^\intercal (A_{i, S(\Tc)}\underline{W}_{\Tc} + \underline{\Zv}_i)] - \E[A_{i, S(\Tc)}\underline{W}_{\Tc} + \underline{\Zv}_i]^\intercal \E[A_{i, S(\Tc)}\underline{W}_{\Tc} + \underline{\Zv}_i]\right) \right)\label{explain5}\\
&\leq \frac{1}{2}\log\left((2\pi e)^l\cdot  \det\left(\underline{W}_{\Tc}^\intercal \left(\E[ A_{i, S(\Tc)}^\intercal A_{i, S(\Tc)} ]- \E[A_{i, S(\Tc)}]^\intercal \E[A_{i, S(\Tc)}]\right)\underline{W}_{\Tc} + \sigma_z^2 I\right) \right)
\label{property_different_entropy}
\end{align}
where (\ref{explain5}) follows from the fact that with the same covariance the Gaussian random vector maximizes the entropy \cite{Cover06}, and the randomness in $A_{i, S(\Tc)}$ is due to the randomness of the index set $S(\Tc)$. Note that
\begin{align}
\E[A_{i, S(\Tc)}] = \frac{1}{m}\sum_{p=1}^m a_{i, p} \mathbf{1}^\intercal.
\end{align}
Meanwhile
\begin{align}
\E[A_{i, S(\Tc)}^\intercal A_{i, S(\Tc)}] = \frac{1}{m}\sum_{p=1}^m a_{i, p}^2 I + \frac{1}{m(m-1)} \sum_{p=1}^m\sum_{\substack{q=1\\q\neq p}}^m a_{i,p}a_{i,q}  (\mathbf{1}\cdot \mathbf{1}^\intercal - I).
\end{align}
Thus
\begin{align}
&\E[A_{i, S(\Tc)}^\intercal A_{i, S(\Tc)}] - \E[A_{i, S(\Tc)}]^\intercal \E[A_{i, S(\Tc)}]\nonumber\\
&= \frac{1}{m}\sum_{p=1}^m a_{i, p}^2 I + \frac{1}{m(m-1)} \sum_{p=1}^m\sum_{\substack{q=1\\q\neq p}}^m a_{i,p}a_{i,q}  (\mathbf{1}\cdot \mathbf{1}^\intercal - I) - \frac{1}{m^2}\left(\sum_{p=1}^m a_{i, p}\right)^2 \mathbf{1}\cdot\mathbf{1}^\intercal\nonumber\\
&= \frac{1}{m}\sum_{p=1}^m a_{i, p}^2 I + \frac{1}{m(m-1)} \left(\left(\sum_{p=1}^m a_{i, p}\right)^2 - \sum_{p=1}^m a_{i, p}^2\right)  (\mathbf{1}\cdot \mathbf{1}^\intercal - I) - \frac{1}{m^2}\left(\sum_{p=1}^m a_{i, p}\right)^2 \mathbf{1}\cdot\mathbf{1}^\intercal\nonumber\\
&= \frac{1}{m}\sum_{p=1}^m a_{i, p}^2 \left(I - \frac{1}{m-1}(\mathbf{1}\cdot \mathbf{1}^\intercal - I)\right) + \left(\sum_{p=1}^m a_{i, p}\right)^2 \left( \frac{1}{m(m-1)}(\mathbf{1}\cdot \mathbf{1}^\intercal - I) - \frac{1}{m^2} \mathbf{1}\cdot \mathbf{1}^\intercal   \right).
\end{align}
Note that $\frac{1}{m(m-1)}(\mathbf{1}\cdot \mathbf{1}^\intercal - I) - \frac{1}{m^2} \mathbf{1}\cdot \mathbf{1}^\intercal = \frac{1}{m^2(m-1)}\mathbf{1}\cdot \mathbf{1}^\intercal - \frac{1}{m(m-1)} I $ is negative semidefinite for sufficiently large $m$, and so is $\underline{W}_{\Tc}^\intercal\left(\frac{1}{m^2(m-1)}\mathbf{1}\cdot \mathbf{1}^\intercal - \frac{1}{m(m-1)} I\right)\underline{W}_{\Tc}$. Hence
\begin{align}
&\det\left(\underline{W}_{\Tc}^\intercal \left(\E[ A_{i, S(\Tc)}^\intercal A_{i, S(\Tc)} ]- \E[A_{i, S(\Tc)}]^\intercal \E[A_{i, S(\Tc)}]\right)\underline{W}_{\Tc} + \sigma_z^2 I\right)\nonumber\\
&\leq \det\left( \frac{1}{m}\sum_{p=1}^m a_{i, p}^2 \underline{W}_{\Tc}^\intercal \left(I - \frac{1}{m-1}(\mathbf{1}\cdot \mathbf{1}^\intercal - I)\right) \underline{W}_{\Tc}  + \sigma_z^2 I\right)\nonumber
\end{align}
as a result of \cite[Corollary 8.4.15]{matrixmath}. Therefore
\begin{align}
&|\Tc|\log m \nonumber\\
&\leq \sum_{i=1}^n \left[\frac{1}{2}\log\left((2\pi e)^l\cdot  \det\left(\frac{1}{m}\sum_{p=1}^m a_{i, p}^2\underline{W}_{\Tc}^\intercal \left(I - \frac{1}{m-1}(\mathbf{1}\cdot \mathbf{1}^\intercal - I)\right) \underline{W}_{\Tc}  + \sigma_z^2 I\right)\right)  - \frac{1}{2}\log\left((2\pi e \sigma_z^2)^l \right)\right]\nonumber\\
&~~~~~+ \log k! + \overline{P}_e^{(m)} \log{m\choose k} + 1+ n\epsilon_{1,n}\nonumber\\
&= \sum_{i=1}^n \frac{1}{2}\log\det\left(\frac{1}{m\sigma_z^2}\sum_{p=1}^m a_{i, p}^2 \underline{W}_{\Tc}^\intercal \left(I - \frac{1}{m-1}(\mathbf{1}\cdot \mathbf{1}^\intercal - I)\right) \underline{W}_{\Tc}  + I\right) \nonumber\\
 &~~~~~+ \log k! + \overline{P}_e^{(m)} \log{m\choose k} + 1+ n\epsilon_{1,n}\nonumber\\
&\leq \frac{n}{2}\log\det\left(\frac{1}{nm\sigma_z^2}\sum_{i=1}^n\sum_{p=1}^m a_{i, p}^2  \underline{W}_{\Tc}^\intercal \left(I - \frac{1}{m-1}(\mathbf{1}\cdot \mathbf{1}^\intercal - I)\right) \underline{W}_{\Tc} +  I\right) \nonumber\\
 &~~~~~ + \log k! + \overline{P}_e^{(m)} \log{m\choose k} + 1+ n\epsilon_{1,n}\nonumber\\
&\leq \frac{n}{2}\log\det\left(\frac{\sigma_a^2}{\sigma_z^2} \underline{W}_{\Tc}^\intercal \left(I - \frac{1}{m-1}(\mathbf{1}\cdot \mathbf{1}^\intercal - I)\right) \underline{W}_{\Tc}  +  I\right) + \log k! + \overline{P}_e^{(m)} \log{m\choose k} + 1+ n\epsilon_{1,n}\nonumber\\
&\leq \frac{n}{2}\log\det\left(\frac{\sigma_a^2}{\sigma_z^2} \underline{W}_{\Tc}^\intercal \left(I - \frac{1}{m-1}(\mathbf{1}\cdot \mathbf{1}^\intercal - I)\right) \underline{W}_{\Tc}  +  I\right) + \log k! + \overline{P}_e^{(m)} k\log m + 1+ n\epsilon_{1,n}.
\end{align}
Then, we have
\begin{align}
&\limsup_{m\rightarrow\infty} \frac{(1-k\overline{P}_e^{(m)}/|\Tc|)\log m}{n_m} - \frac{\log k! + n_m \epsilon_{1,n} + 1}{|\Tc|n_m}\nonumber\\
&\leq \limsup_{m\rightarrow\infty}  \frac{1}{2|\Tc|}\log\det\left(\frac{\sigma_a^2}{\sigma_z^2} \underline{W}_{\Tc}^\intercal \left(I - \frac{1}{m-1}(\mathbf{1}\cdot \mathbf{1}^\intercal - I)\right) \underline{W}_{\Tc}  +  I\right)\nonumber\\
&=  \frac{1}{2|\Tc|}\log\det\left(\frac{\sigma_a^2}{\sigma_z^2} \underline{W}_{\Tc}^\intercal \underline{W}_{\Tc}  +  I\right)
\end{align}
for all $\Tc\subseteq [k]$. Since $\lim_{m\rightarrow\infty} \overline{P}_e^{(m)} = 0$, we reach the conclusion
\begin{align}
\limsup_{m\rightarrow\infty} \frac{\log m}{n_m}
\leq\frac{1}{2|\Tc|}\log\det\left(\frac{\sigma_a^2}{\sigma_z^2} \underline{W}_{\Tc}^\intercal \underline{W}_{\Tc}  +  I\right)
\end{align}
for all $\Tc\subseteq [k]$. This completes the proof of Theorem 2.


\section{Proof of Corollary \ref{corollary3}}
\label{A_ProofofExample}
To justify this corollary, we need to show
\begin{align}
\min_{\Tc\subseteq [k]}\left[\frac{1}{2|\Tc|}\log\det\left(I + \frac{\sigma_a^2}{\sigma_z^2}{\underline{W}_{\Tc}}^\intercal {\underline{W}}_{\Tc}\right)  \right] = 2\cdot\frac{1}{2k} \log\left(1+k\cdot\frac{\sigma_a^2}{\sigma_z^2}\right).\nonumber
\end{align}

To begin with, recall that $k$ is even, and $\wv_2$ is defined in (\ref{definew2}). For a given $\Tc\subseteq[k]$, let $\Tc_1 = \Tc\cap [\frac{k}{2}]$, $\Tc_2 = \Tc \backslash\Tc_1$, $t = |\Tc|$, $t_1 = |\Tc_1|$, and $t_2 = |\Tc_2|$. One can obtain
\begin{align}
{\underline{W}_\Tc}^\intercal \underline{W}_\Tc = \left[\begin{array}{cc}
t & t_1 - t_2 \\
t_1 - t_2 & t \end{array}\right].\nonumber
\end{align}
Let  $\alpha \triangleq \frac{\sigma_a^2}{\sigma_z^2}$ for notational simplicity. Thus
\begin{align}
\frac{1}{2|\Tc|}\log\det\left(I + \alpha{\underline{W}_{\Tc}}^\intercal {\underline{W}}_{\Tc}\right) &= \frac{1}{2t}\log\det\left[\begin{array}{cc}
1+\alpha t & \alpha(t_1 - t_2) \\
\alpha(t_1 - t_2) & 1+\alpha t \end{array}\right]\nonumber\\
&= \frac{1}{2t}\log \left( 1+2\alpha t + 4 \alpha^2 t_1t_2   \right)
\end{align}
where we use the fact that $t=t_1 + t_2$. Note that, for a given $t\in[k]$,
\begin{align}
\min_{\Tc: \Tc\subseteq [k], |\Tc| = t \leq \frac{k}{2}} \frac{1}{2t}\log \left( 1+2\alpha t + 4 \alpha^2 t_1t_2   \right) = \frac{1}{2t}\log ( 1+2\alpha t)
\end{align}
and
\begin{align}
\min_{\Tc: \Tc\subseteq [k],|\Tc| = t > \frac{k}{2}} \frac{1}{2t}\log \left( 1+2\alpha t + 4 \alpha^2 t_1t_2   \right) = \frac{1}{2t}\log \left( 1+2\alpha t + 4 \alpha^2 \frac{k}{2}\left(t- \frac{k}{2}\right)   \right)
\end{align}
where we use the implicit constraints that $t_1, t_2 \leq \frac{k}{2}$. Then, the problem becomes evaluating
\begin{align}
\min_{t:t\in[k]} f(t), \textrm{ where }
f(t) = \left\{ \begin{array}{ll}
\frac{1}{2t}\log ( 1+2\alpha t) & \textrm{if }0<t\leq \frac{k}{2},\\
\frac{1}{2t}\log \left( 1+2\alpha t + 4 \alpha^2 \frac{k}{2}\left(t- \frac{k}{2}\right)   \right) & \textrm{if }\frac{k}{2}+1\leq t\leq k.\end{array} \right.
\end{align}
First, it can be readily seen that $\min_{t:t\in[\frac{k}{2}]} f(t) = \frac{1}{k}\log(1+\alpha k)$. Next, we consider the function
\[
g(t) \triangleq \frac{\log(\beta_1 + \beta_2 t)}{2t}
\]
where $\beta_1 \triangleq 1-\alpha^2k^2$ and $\beta_2 \triangleq 2\alpha(1+\alpha k)$ for $t\in[\frac{k}{2}, k]$. Note that\footnote{For the purpose of analysis, the base of logarithm is not important, as long as all of them are consistent. Here, we choose natural logarithm to simplify the calculation.}
\begin{align}
\frac{\partial g(t)}{\partial t} = \frac{1-\frac{\beta_1}{\beta_1+\beta_2 t}-\log (\beta_1+\beta_2 t)}{t^2}.
\end{align}
To obtain stationary points, we solve
\begin{align}
1-\frac{\beta_1}{\beta_1+\beta_2 t}+\log \frac{1}{\beta_1+\beta_2 t} = 0, ~t\neq 0
\end{align}
which is equivalent to
\begin{align}
1+v(t) = \beta_1 e^{v(t)}, ~t\neq 0
\label{stationary}
\end{align}
where $v(t) \triangleq \log \frac{1}{\beta_1+\beta_2 t}$. Note that $\beta_1 < 1$. We will consider three different cases. The first case is $0<\beta_1<1$. By comparing the curves of $1+v$ and $\beta_1 e^{v}$ as functions of $v$, we see that there are two solutions with opposite signs, namely $v_1<0$ and $v_2>0$, to (\ref{stationary}). Note that
\[
g\left(\frac{k}{2}\right) = g(k) = \frac{1}{k}\log(1+\alpha k)\nonumber.
\]
Meanwhile, $v(t)$ is monotonically decreasing on $[\frac{k}{2}, k]$, and
\[
v(k) = \log \frac{1}{(1+\alpha k)^2} <v\left(\frac{k}{2}\right) = \log \frac{1}{1+\alpha k} <0.
\]
Therefore, it is evident that $v(k) < v_1< v\left(\frac{k}{2}\right)<v_2$. Further, it can be readily seen that
\begin{align}
\frac{\partial g(t)}{\partial t}\Bigg|_{t=\frac{k}{2}} = \frac{1+v(t) - \beta_1 e^{v(t)}}{t^2}\Bigg|_{t=\frac{k}{2}} >0\nonumber\\
\frac{\partial g(t)}{\partial t}\Bigg|_{t=k} = \frac{1+v(t) - \beta_1 e^{v(t)}}{t^2}\Bigg|_{t=k} <0\nonumber.
\end{align}
In summary, $g(t)$ is increasing at $t=\frac{k}{2}$ and decreasing at $t=k$, it takes the same value at these two points, and there exists only one stationary point in between. These observations lead to the conclusion that $\min_{t:t\in[k]\backslash[\frac{k}{2}]} f(t) =f(k)= \frac{1}{k}\log(1+\alpha k)$.

To analyze the cases for $\beta_1 = 0$ and $\beta_1 < 0$, we only need to note that there is only one solution $v_1$ to (\ref{stationary}). Thus, similar argument applies to these two cases.

\newpage
%
%
%
%

\end{document}